\theoremstyle{thmstyleone}%
\newtheorem{theorem}{Theorem}
\newtheorem{corollary}[theorem]{Corollary}
\newtheorem{proposition}[theorem]{Proposition}%
\theoremstyle{thmstyletwo}%
\newtheorem{remark}{Remark}%
\theoremstyle{thmstylethree}%
\newtheorem{definition}{Definition}%
\newcommand{\rd}{{\rm d}}
\newcommand{\rP}{{\rm P}}
\newcommand{\rPi}{{\rm P}_{\rm inv}}
\newcommand{\Ji}{J_{\rm inv}}
\newcommand{\rhoi}{\rho_{\rm inv}}
\newcommand{\dsig}{\dot{\Sigma}_{\rm s}}
\newcommand{\avg}[1]{\mathbb{E}^{\rm inv}_{\Delta t}\left( #1 \right)}
\newcommand{\avgt}[1]{\mathbb{E}^{\rm inv}_{t}\left( #1 \right)}
\newcommand{\blue}[1]{\textcolor{black}{#1}}
\begin{document}

\title[Upper bounds on dissipation in irreversible diffusions]{Vorticity and level-set variations of invariant current
bound steady-state dissipation}

\author[1,2]{\fnm{Hao} \sur{De}}\email{hao.de@queens.ox.ac.uk}

\author*[2]{\fnm{Alja\v{z}} \sur{Godec}}\email{agodec@mpinat.mpg.de}

\affil[1]{\orgdiv{Mathematical Institute}, \orgname{University of Oxford},
  \orgaddress{\street{Woodstock Road}, \city{Oxford}, \postcode{OX2 6GG}, \country{United Kindgom}}}

\affil*[2]{\orgdiv{Mathematical bioPhysics Group}, \orgname{Max Planck
    Institute for Multidisciplinary Sciences}, \orgaddress{\street{Am
      Fassberg 11}, \city{G\"ottingen}, \postcode{37077}, \country{Germany}}}


\abstract{A non-vanishing entropy production rate is a hallmark of non-equilibrium
  stationary states and is therefore at the heart of non-equilibrium
  thermodynamics. It is a manifestation of a steady circulation
  $J_{\rm inv}$ along the
  level sets of the invariant density $\rho_{\rm inv}$, and is thus generically used to
  quantify how far a steady system is driven out of
  equilibrium. While it is well known that there exists a continuum of
  distinct steady states with 
  the same invariant measure, the question how the geometry and topology
  of the invariant current \emph{a priori} affect dissipation
  remained elusive. For confined irreversible diffusions we identify two minimal descriptors, the
  $\rho_{\rm inv}$-weighted vorticity and the variation of $J_{\rm
    inv}$ along level sets of $\rho_{\rm inv}$, and prove that
  these jointly bound
  from above the steady-state entropy production rate. 
  In regions where $\rhoi$ is close to Gaussian the bound is
dominated solely by the vorticity of the drift field and 
    in the low-noise (Freidlin–Wentzel) limit 
    by any non-potential contribution to the drift, 
rendering $J_{\rm inv}$ virtually 
  constant along the level sets of $\rho_{\rm inv}$.}

\keywords{stochastic thermodynamics, entropy production, irreversible diffusion,
  vorticity, Sobolev inequalities}



\maketitle

\section{Introduction}\label{sec1}
A defining feature of non-equilibrium as opposed to equilibrium steady states is a
positive entropy production
\cite{Jiang2004,Maes2008PA,Maes2008EEL,Seifert2010EEL,Barato2015PRL,Gingrich2016PRL,Dieball2022PRL,Dieball2022PRR}. The
thermodynamics of far-from-equilibrium systems is essential for
elucidating the physical
principles that sustain driven, in particular living matter
\cite{Active,Active_1,Active_2,Active_3,Active_4}. Erwin
Schrödinger observed already in 1944 \cite{Erwin}
that living organisms must continuously increase the entropy of its
surroundings in order to avoid decaying to thermodynamic
equilibrium. Far-from-equilibrium thermodynamics was put on formal
grounds by the ``Brussels school'' \cite{Prigogine} and was more
recently extended to Stochastic Thermodynamics \cite{Seifert2012RPP} focussing on a
sample-path perspective. 

Stochastic Thermodynamics provides a quite general framework that, in
principle, applies arbitrarily far from equilibrium
\cite{Seifert2012RPP}. The main constraining assumption of
Stochastic Thermodynamics is that of \emph{local equilibrium}, stating
that all unobserved degrees of freedom evolve sufficiently fast with
respect to the observed ones, such that they may be faithfully considered to be,
at every instance, at thermodynamic equilibrium (i.e.\ ``Gibbsian'') at the temperature of
the surroundings (also called ``the heat bath'')
\cite{Jiang2004,Spohn_LDB,Spohn_book,Eckmann,Lefevre,Landim_LDB,Jourdain,Maes_LDB}. Within
this local-equilibrium paradigm the total entropy flow along an
individual \emph{stochastic} sample path $\omega$ compares the probability measure of
a path, $\rP_t(\omega)$, with the probability measure of the respective
time-reversed path,  $\rP_{\theta t}(\omega)$, both defined on the
$\sigma$-algebra $\mathcal{F}^t_\tau$ generated by the process $(x_\tau)_{0\le \tau\le t}$, and is defined as
\begin{equation*}
\Sigma_t(\omega)\equiv\ln \frac{\rd\rP_t}{\rd\rP_{\theta t}}(\omega)
  \label{def}
\end{equation*}
where $\theta (x_\tau)_{0\le \tau\le t}\equiv \epsilon (x_{t-\tau})_{0\le \tau\le t}$ denotes the
time-reversal operation and, properly defined for the given setting;
in particular, $\epsilon$ reverses the sign of all degrees of freedom
with odd time parity 
(see
e.g.\ \cite{Jiang2004,Spohn_LDB,Spohn_book,Eckmann,Lefevre,Landim_LDB,Jourdain,Maes_LDB}
for various settings). In typical physically relevant settings \cite{Jourdain,Jiang2004}, and in
particular in the case of overdamped diffusions considered herein (see  \cite{Qian_CMP})
$\rP(\omega)$ may be assumed to possess a density $p(x_\tau(\omega))$
and $\Sigma(\omega)$ is defined via the log-ratio of the respective
probability densities $p(x_\tau(\omega))$ and $p(\theta
x_\tau(\omega))$.

In this work we focus exclusively on strongly ergodic dynamics with
invariant measure $\rPi$ that also possesses a density $\rhoi$ and, as
a result of broken detailed balance,  carries a stationary invariant
current $\Ji$. In this
setting the thermodynamic observable of central interest is the
\emph{steady state entropy production rate} (or steady-state
dissipation) defined as
\begin{equation}
\dsig\equiv \lim_{t\to\infty}\frac{1}{t}\ln
\frac{\rd\rP_t}{\rd\rP_{\theta
    t}}(\omega)=\lim_{\Delta t\to 0^+}\frac{1}{\Delta
  t}\avg{\ln\frac{p([x_\tau]_{0\le \tau\le \Delta t})}{p([\theta x_\tau]_{0\le \tau\le \Delta t})}},
\label{epr}  
\end{equation}
which characterizes the degree of irreversibility and hence
non-equilibrium and where the second equality follows from $x_\tau$
being ergodic. $\avgt{\cdot}$ denotes the expectation over $p([
  x_\tau]_{0\le \tau\le t})$ in the steady-state ensemble,
i.e.\ over paths $(x_\tau)_{0\le \tau\le t}$ propagating from the
initial density $p(x_{0})=\rhoi(x_0)$ up to time $t$. The limits in
Eq.~\eqref{epr} exist \cite{Qian_CMP}.

We consider overdamped diffusions evolving according to the 
stochastic differential equation \cite{Ikeda1981}
\begin{equation}
dx_\tau=F(x_\tau)d\tau+\sqrt{2D}dW_\tau
\label{SDE}  
\end{equation}
with $x_\tau\in\mathbb{R}^d$, a smooth confining drift field
$F:\mathbb{R}^d\to\mathbb{R}^d$, the $d\times d$, symmetric
positive
definite matrix $D$, and $W_\tau$ denoting
the $d$-dimensional Wiener process. $F$ is supposed to have nice
growth properties and to locally satisfy Lipschitz conditions guaranteeing the
existence and uniqueness of a solution of Eq.~\eqref{SDE} \cite{Ikeda1981}.
The square root of $\sqrt{D}$ is defined as usual
as $S^T\sqrt{\lambda_i}S$ via orthogonal diagonalization  $SDS^T={\rm
  diag}(\lambda_i)$ with $S^T=S^{-1}$.

From a physical perspective
Eq.~\eqref{SDE} embodies stochastic evolutions in low Reynolds media
such as macromolecules \cite{Doi1988}, molecular machines \cite{Motors}, or
suspensions of colloidal particles \cite{Dhont}.

The Ito equation \eqref{SDE} translates to the Fokker-Planck (or forward
Kolmogorov) equation for the 
probability density $P:\mathbb{R}^d\times[0,\infty)\to\mathbb{R}^+$
\begin{equation}
\partial_t P(x,t) =\nabla \cdot [D\nabla -F(x)] P(x,t)\equiv -\nabla \cdot J(x,t)
\label{FPE}
\end{equation}
where $\nabla$ is the $d$-dimensional gradient operator and we defined the current density $J(x,t)\equiv(F(x)-D\nabla)P(x,t)$, $J:\mathbb{R}^d\times[0,\infty)\to\mathbb{R}^d$. By
our assumptions we have $\lim_{t\to\infty}P(x,t)=\rhoi(x)$ and
$\lim_{t\to\infty}J(x,t)=\Ji(x)$ and obviously $\nabla\cdot\Ji(x)=0$,
i.e.\ the invariant current is incompressible.

As a result of the incompressibility of the stationary current,
the drift $F$ of the ergodic diffusion in Eq.~\eqref{SDE} may be
decomposed into reversible $F_{\rm rev}$ and irreversible $F_{\rm irr}$
\cite{Dieball2022PRR,Duong_2023} (alternatively called
``conservative'' and ``dissipative'' \cite{Graham,Eyink}) components,
$F(x)=F_{\rm rev}(x)+F_{\rm irr}(x)$ defined as
\begin{eqnarray}\label{Fdecomp}
  F_{\rm rev}(x)&\equiv&D\nabla\ln\rhoi(x)\equiv -D\nabla\Phi(x)\nonumber\\
  F_{\rm irr}(x)&\equiv&\frac{\Ji(x)}{\rhoi(x)}=
  A(x)\nabla\ln\rhoi(x)+\nabla\cdot A(x),
  \label{decomp}
\end{eqnarray}
where $\Phi(x)$ is the generalized ``nonequilibrium'' potential, $A(x)$ is a $d\times d$ anti-symmetric matrix ($A(x)^T=-A(x)$)
and we define the divergence of a matrix as $(\nabla\cdot
A(x))_k\equiv\sum_{l=1}^d\partial_{x_l}A_{kl}(x)$. One can henceforth
show (see e.g.~\cite{Seifert2012RPP,Qian_CMP}) that
\begin{equation}
\dot{\Sigma}_s=\int dx F_{\rm irr}(x)\cdot D^{-1}F_{\rm irr}(x)\rhoi(x)\ge 0
\label{dissipation}  
\end{equation}
The intuition about
$A(x)$ in Eq.~\eqref{decomp} is that it describes variations of the
$\Ji(x)$ along the level sets of $\rhoi(x)$. That is, if $A(x)$ is a
constant matrix (i.e.\ $\nabla\cdot A=0$) then $\lvert \Ji(x)\rvert$ is constant on
level sets of $\rhoi(x)$ and $F_{\rm rev}(x)\perp F_{\rm
  irr}(x),\,\forall x$. However, when $A(x)$ is not constant (which is
typically the case), $F_{\rm
  irr}(x)$ has a component in the direction of $F_{\rm
  rev}(x)$, contained in $\nabla\cdot A(x)$, and it is generally impossible to
determine $A(x)$ from $F(x)$, and thus steady-state dissipation $\dot{\Sigma}_s$ via Eq.~\eqref{dissipation}, without first determining $\rhoi(x)$, and
there is no general way to do so. Note that a Helmholtz-Hodge type may
deliver mutually orthogonal fields, but these will not be $F_{\rm
  rev}(x)$ and $F_{\rm irr}(x)$. Therefore, seemingly not as much can be inferred
about the nonequilibrium thermodynamics of the process \eqref{SDE}
directly from $F(x)$ alone. Moreover, even in cases when both
$\rhoi(x)$ and $F_{\rm  irr}(x)$ are known, an understanding of how
the properties of $\rhoi(x)$ and $F_{\rm  irr}(x)$, and in particular the
topology of $\Ji(x)$, affect $\dot{\Sigma}_s$ remains elusive.   

Notwithstanding its physical importance, $\dot{\Sigma}_s$ is very
difficult to quantify in general, as it requires detailed knowledge about \emph{all}
dissipative degrees of freedom, i.e.\ the knowledge of $\rhoi(x)$ and $\Ji(x)$. Important insight can be obtained in
the form of experimentally accessible lower bounds via the so-called
Thermodynamic Uncertainty Relation
\cite{Barato2015PRL,Pietzonka2016PRE,Dechant_2018,Horowitz2019NP,Koyuk2020PRL,Falasco2020NJP,Dechant2021PRX,Dieball_TUR},
which bounds $\dot{\Sigma}_s$ from below by means of any
time-accumulated current observable inferred from individual
stochastic trajectories. This approach indeed provides a very powerful means of
inferring $\dot{\Sigma}_s$, but does not give insight into what
properties of $F(x)$ relate to $\dot{\Sigma}_s$ and in what
manner. For this, one needs a more direct connection between the two
that does not require the knowledge of $\rhoi(x)$ as an input.    

An important rigorous result in this direction was obtained in terms of the
\emph{circulation decomposition}, the sum of an
``observed'' and ``hidden circulation'' around some closed curve and
note that both contributions are positive
\cite{Qian_CMP} (see also a more recent work on a related topic \cite{Ramin}). This approach, notably, does not require to know
$\rhoi(x)$ nor $\Ji(x)$. However, it depends explicitly on the choice
of the closed curve and therefore can in practice serve as a
lower bound on $\dot{\Sigma}_s$, as the possibility of missing the
``hidden'' circulation for a selected closed curve cannot be
dismissed if one does not know the topology of $\Ji(x)$.      

There is thus a need for results that directly relate the properties
of $F(x)$ to $\dot{\Sigma}_s$ in a general setting. Here we make
progress in this direction by proving upper bounds on $\dot{\Sigma}_s$
in terms of the vorticity of $F(x)$ and variation of $\Ji(x)$ on level
sets of $\rhoi(x)$ by using Sobolev inequalities.

In the (low-noise) Freidlin-Wentzel limit, i.e.\ $\|D\|\to 0$
  we show that the bound on $\dot{\Sigma}_s$ becomes dominated by \blue{
$\nabla\Phi(x)\times F_{\rm irr}(x)$ alone, which 
excludes any contributions parallel to $\nabla\Phi(x)$, and in fact if $F_{\rm rev}(x)$ is parallel to $\nabla\Phi(x)$ then $F_{\rm irr}(x)$ may be replaced by $F(x)$.} Moreover, in
regions where $\rhoi(x)$ is essentially a Gaussian (or can be well
approximated by one) the upper bound on $\dot{\Sigma}_s$ is as well
dominated solely by vorticity. Notably, as forward and backward paths between
pairs of points are nominally expected to be different in vortex flow,
the fact that vorticity bounds time-irreversibility from above seems
quite intuitive. Our results provide a
direct and intuitive link between the topology of $\Ji(x)$ and the steady-state
dissipation $\dot{\Sigma}_s$. By bounding  $\dot{\Sigma}_s$ from above,
they are complementary to the lower bounds established by the
Thermodynamic Uncertainty relation. The results in the low-noise limit, moreover, provide
further support for the so-called ``local detailed balance'' structure
of transition rates assumed in constructing approximate Markov-jump
representations of dynamic on long time-scales.\\
\indent The paper is structured as follows. In Sec.~\ref{sec2} we
present a concise summary of our main results and discuss their
physical implications. In Sec.~\ref{sec3} we then give the precise
definitions and state the essential preparatory theorems. In
Sec.~\ref{sec4} we present the detailed proofs of the results.

\section{Summary of the main results}\label{sec2}
We throughout assume that $x_\tau\in\Omega$ and that the domain $\Omega\subset \mathbb{R}^2\text{ or
}\mathbb{R}^3$ is bounded,
Lipschitz-continuous and simply-connected. We seek insight into which
properties of $\Ji(x)$ and in turn $F_{\rm irr}(x)$ of the overdamped
diffusion in Eq.~\eqref{SDE} determine the steady-state
dissipation $\dot{\Sigma}$ defined in Eq.~\eqref{dissipation}.

We will show that these turn out to be the  $\rhoi$-weighted \emph{vorticity}
\begin{equation*}
V^q\equiv\int_\Omega \lvert \nabla\times F_{\rm irr}(x)\lvert^{2q}
\rhoi(x)dx,
\end{equation*}
where $q\ge 1$ and with $V^1\equiv V$, the $\rhoi$-weighted level-set variations of $\Ji(x)$
\begin{equation*}
\delta^q_{\rm LS}\equiv\int_{\Omega} \lvert \nabla\cdot F_{\rm irr} \rvert^{2q} \rhoi(x) dx=\int_\Omega \lvert \nabla \Phi(x) \cdot  F_{\rm irr}(x)\lvert^{2q}
\rhoi(x)dx,
\end{equation*}
with $\delta^1_{\rm LS}\equiv\delta_{\rm LS} $, and the  $\rhoi$-weighted deviations of $\Ji(x)$ from alignment with $\nabla\Phi(x)$
\begin{equation*}
\delta^q_{\nparallel \nabla \Phi}\equiv\int_\Omega \lvert \nabla \Phi(x) \times  F_{\rm irr}(x)\lvert^{2q}
\rhoi(x)dx,
\end{equation*}
with $\delta^1_{\nparallel \nabla \Phi}\equiv \delta_{\nparallel \nabla \Phi}$.

The vorticity $V^q$ is defined in a standard fashion and the $\rhoi$ weight
is intended to ``grade'' the local vorticity features according to
their physical relevance in the steady-state ensemble. When
  $D={\rm const}\cdot \mathbbm{1}$ we have $\nabla \times D\nabla\Phi(x)=0$ and
  we may replace $\nabla\times F_{\rm irr}(x)\to \nabla\times F(x)$ everywhere. 

To provide intuition about why $\delta^q_{\rm LS}$ measures level-set variations of
$\Ji(x)$ we note that
if $\Ji(x)$ flows
perfectly along the level sets of $\rhoi(x)$ and the level sets are closed curves, then the conservation of mass requires $\lvert \Ji(x) \rvert$ must be constant along each level set. 
As $\rhoi(x)$ is by definition constant on level sets, $\Ji(x)$ being mass-preserving is equivalent to $F_{\rm irr}(x)$ being volume-preserving i.e.\ divergence free. 
Hence, any level-set variation of $\lvert \Ji(x)\rvert$ leads to
non-zero $\nabla\cdot F_{\rm irr}(x)$ and non-zero $\nabla\Phi(x)\cdot F_{\rm irr}(x)$.

Just as $\delta^q_{\rm LS}$ essentially measures deviations of $\Ji(x)$ from being perfectly
orthogonal to $\nabla\Phi(x)$ on level sets of $\rhoi$, 
the quantity $\delta^q_{\nparallel \nabla \Phi}$ measures how much $\Ji(x)$ deviates from being parallel to the potential gradient $\nabla\Phi(x)$.

Under these assumptions (that will be detailed further below) we prove the
following upper bounds.

\begin{enumerate}

\item For a general diffusion \eqref{SDE} in nonequilibrium steady
  state we find (see Theorem~\ref{theorem1})
\begin{align*}
\dot{\Sigma}_s&\le C(\Omega) \left [V + \frac{1}{4}(\delta_{\rm
    LS}+\delta_{\nparallel \nabla \Phi})\right ]\,,
\end{align*}  
where $C(\Omega)$ is a constant depending only of $\Omega$.\\

\item In the Freidlin-Wentzel limit when $\|D\|=\mathcal{O}(\epsilon)\to 0$ in
  Eq.~\eqref{SDE}, we prove (see
  Corollary~\ref{Freidlin-Wentzel}) that to leading order in  $\epsilon$ 
\begin{align*}
 & \dot{\Sigma}^\epsilon_s \leq C(\Omega) \int_{\Omega}
  \lvert\nabla\Phi^\epsilon\times F^\epsilon_{\rm irr}(x)\rvert^2\rhoi^\epsilon \ dx,   
\end{align*}
where $C(\Omega)$ is a constant depending only on the domain
$\Omega$. Moreover, \blue{if the reversible drift is parallel to the potential gradient}, we may
  replace $F^\epsilon_{\rm irr}(x)$ by \emph{full} drift
field $F(x)$, suggesting that as $\|D\|\to 0$ (i.e.\ in the
low-noise limit) dissipation is bounded from
above directly by any non-potential contribution to $F$ in regions
where $\rhoi^\epsilon$ has substantial mass.\\

\item When $\underset{\Omega}{\sup}\,\rhoi<\infty$ and
  $\rhoi^{1/(1-q)}\in L^1(\Omega)$ for $1<q<\infty$ we show (see
  Theorem~\ref{theorem2})
  \begin{align*}
    \dot{\Sigma}_s\leq C(\Omega,\rhoi) \left[(V^q)^{1/q}+
      (\delta^q_{\rm LS})^{1/q}
      \right]
  \end{align*}
where 
\begin{align*}
    C(\Omega,\rhoi)&=C(\Omega)\cdot \sup_\Omega \rhoi\cdot  \left(\int_\Omega \rhoi^{1/(1-q)}dx \right)^{(q-1)/q}.
\end{align*}
is a constant depending only on $\Omega$ and $\rhoi$.\\

\item Moreover, if in addition $\underset{\Omega}{\inf}\,\rhoi>0$ we
  have that (see Corollary~\ref{bounds})
\begin{align*}
    \dot{\Sigma}_s&\leq C(\Omega)\, \sup_\Omega \rhoi\, \left(\inf_\Omega \rhoi \right)^{-1}  \left[V+\delta_{\rm LS}\right],
\end{align*}
where the constant $C(\Omega)$ depends only on $\Omega$.\\

\item If on a subdomain $\omega\subset\Omega$ the generalized
  potential $\Phi(x)$ is essentially a parabola
  with principal axes $K_i>0$  and the diameter of $\omega$ is smaller
  than $(\sum_{i=1}^d K_i)^{-1/2}$, it holds that
\begin{equation*}
  \int_\omega \lvert F_{\rm irr}\cdot\nabla\Phi
    \rvert^2 dx \lesssim \left(\sum_{i=1}^d K_i\right)\, {\rm diam}(\omega)^2 \int_\omega \lvert \nabla\times F_{\rm irr} \rvert^2 dx,
\end{equation*}  
where in $d=3$ the above result is subject to some symmetry
requirements. The above result suggests that in the
vicinity of
local minima of $\Phi(x)$ level-set variations of $\Ji(x)$ are
sub-dominant and we may essentially regard $\Ji(x)$ as being
orthogonal to $\nabla\Phi(x)$ on every subdomain $\omega$. This seems to be particularly important in the $\|D\|\to 0$
limit. 
\end{enumerate}

The fact that the upper bound on $\dot{\Sigma}_s$ depends on the
vorticity of $F_{\rm irr}$ is expected, since vortices nominally break the
forward-backward symmetry of sample paths between two given
points. However, the remaining two contributions seem less
obvious. That is, $\dot{\Sigma}_s$ essentially reflects the energetic cost of
maintaining the nonequilibrium steady state and thus it \emph{a priori} does not seem
to be obvious that $F_{\rm irr}$ with a component in the
direction of $\nabla\Phi(x)$ should ``cost'' more.

The seemingly most insightful result is that in the Freidlin-Wentzel limit
($\|D\|\to 0$), which physically would correspond to a $\Phi(x)$ with
deep minima separated by high barriers. Here non-vanishing $F_{\rm
  irr}(x)$ imply that transitions between two minima in opposite
directions may occur along distinct \emph{instanton} pathways (see
e.g.\ \cite{Stein,Bouchet}).  These instantons are minimum-action
paths between basins around which the reactive sample paths $x_\tau$
concentrate tightly as $\|D\|\to 0$ \cite{Stein,Bouchet}.
On long time-scales, longer
than the typical escape time over a typical barrier of $\Phi(x)$ which
is exponential in the barrier height \cite{Freid,Stein,Bouchet}, such
dynamics has an effective representation with a Markov jump process
on a state space $\{i\}$ \cite{Moro} which is typically assumed to 
obey ``local detailed balance'' \cite{Falasco,Hartich}, i.e.\ with transition rates
between states $i$ and $j$ given by
\begin{equation}
  k_{i\to j}/k_{j\to i}={\rm e}^{\phi(i)-\phi(j)-A_{ij}}
  \label{LDB}
\end{equation}
with antysmmetric link ``affinity'' $A_{ij}=-A_{ji}$. Note that
here $\phi_i$ include, in addition to $\Phi(x)$,
the entropy of the basin \cite{Falasco,Hartich}. 
On such a reduced state space $\dot{\Sigma}_s$ and hence
time-irreversibility is determined solely by the affinities
$A_{ji}\ne 0$ \cite{Seifert2012RPP}. In full generality the
problem of determining $k_{i\to j}$ for an irreversible diffusive dynamics
\eqref{SDE} in the $\|D\|\to 0 $ limit remains unsolved. Important
progress has been made under the assumption of a \emph{transverse
decomposition} \cite{Bouchet} which in our setting corresponds to
the simplifying assumption that $A$ in Eq.~\eqref{decomp} is a
constant matrix and hence \blue{$F_{\rm irr}(x)\perp F_{\rm rev}(x)$}
everywhere. Under these assumptions, the main result of \cite{Bouchet}
(see Eq.~(1.10) in conjunction with Eq.~(2.12) therein) indeed implies
the structure of Eq.~\eqref{LDB}.\\
\indent The asymptotic
correspondence seems to be reflected in our results; i.e.\
according to Corollary~\ref{Freidlin-Wentzel} in the
limit $\|D\|\to 0$ 
the affinities $A_{ij}$ in Eq.~\eqref{LDB}
should indeed arise from local contributions to
$\nabla\Phi^\epsilon(x)\times F^\epsilon_{\rm irr}(x)$ that 
exclude any
contributions parallel to $\nabla\Phi^\epsilon(x)$ along the distinct
forward-backward minimum-action (instanton) transition paths between basins.
This seems to agree with the notion of jump rates obeying local
detailed balance insofar as our upper bounds are reasonably sharp.

\section{Setup}\label{sec3}
Let $\Omega$ be an open subset of $\mathbb{R}^d$ with boundary
$\Gamma$, along which $n$ denotes the unit normal vector. Consider the
overdamped diffusion process in Eq.~\eqref{SDE} on $\Omega$ with
steady state density $\rhoi(x)$, which is the unique normalised
solution of the boundary value problem
$$
\begin{dcases*}
  \nabla\cdot[D\nabla\rhoi(x)-F(x)\rhoi(x)]=0 \quad \text{in}\ \Omega\nonumber\\
[D\nabla\rhoi(x)-F(x)\rhoi(x)]\cdot n =0 \quad \text{on}\ \Gamma.
\end{dcases*}
$$
In the following we define the main quantities of interest. 

\begin{definition}
The invariant current is defined as 
\begin{equation*}
    \Ji(x)\equiv F(x)\rhoi(x)-D\nabla\rhoi(x)=F_{\rm irr}(x)\rhoi(x),
\end{equation*}
the ``non-equilibrium'' potential as
\begin{equation*}
    \Phi(x) \equiv - \ln \rhoi(x),
\end{equation*}
where from follows the reversible part of $F(x)$, defined as 
\begin{equation*}
    F_{\rm rev}(x)\equiv -D\nabla \Phi(x),
\end{equation*}
while the irreversible part of $F(x)$ is defined as 
\begin{equation*}
    F_{\rm irr}(x)\equiv F(x)+D\nabla\Phi(x)=-A(x)\nabla\Phi+\nabla\cdot A(x).
\end{equation*}
The $\rhoi$-weighted vorticity of the drift field $F(x)$ is defined as
\begin{equation*}
    V\equiv \int_{\Omega} \lvert \nabla\times F_{\rm irr}(x) \rvert
    ^2\rhoi(x) dx\overset{D={\rm const}\mathbbm{1}}{=} \int_{\Omega} \lvert \nabla\times F(x) \rvert ^2\rhoi(x) dx,
\end{equation*}
where the last equality holds provided that $D$ is isotropic i.e. a
constant multiple of the identity matrix.
\end{definition}

To demonstrate our results, we need to further introduce some
preliminaries on Sobolev inequalities. We will now state the theorems to
be applied in the derivation of our results and sketch some proofs,
while for the technical details we refer to  \cite{V.Girault} (Chapter
I \S{1}-\S{3}).

We will throughout  assume that $\Omega$ is a bounded,
simply-connected open subset of $\mathbb{R}^d$ with $d=2$ or $3$, and
the boundary $\Gamma=\bigcup_{i=0}^p \Gamma_i$ is
Lipschitz-continuous. Recall that a region is simply connected if every closed curve within it can be continuously deformed to a single point in that region. In $\mathbb{R}^2$, a simply connect region has no holes and the boundary $\Gamma$ consists of only one piece. In $\mathbb{R}^3$, we can allow for holes in the middle without going all the way through the region, and we can have $p\geq 1$. The duality between $H^{-1/2}(\Gamma_i)$ and
$H^{1/2}(\Gamma_i)$ will be denoted by $\langle.,.\rangle_{\Gamma_i}$, and
$(\cdot,\cdot)$ will be used to represent the scalar product on
$L^2(\Omega)^d$. We will present the results for dimension $d=3$, the
same results also hold for dimension $d=2$ for which we refer to
e.g. \cite{V.Girault} (Proposition 3.1) for details.

Let $v:\Omega \to \mathbb{R}^d$ be a vector field. We set up the notations for the following $L^2$-based Sobolev spaces:
\begin{equation*}
    H(div;\Omega)\equiv\{v\in L^2(\Omega)^d:\nabla\cdot v\in L^2(\Omega) \}
\end{equation*}
which is a Hilbert space equipped with the norm
\begin{equation*}
\|v\|_{H(div;\Omega)}=[\|v\|^2_{L^2(\Omega)}+\|\nabla\cdot v\|^2_{L^2(\Omega)}]^{1/2}.
\end{equation*}
We define the subspace of vector fields with zero flux across $\Gamma$: 
\begin{equation*}
 H_0(div;\Omega)\equiv\{u\in H(div;\Omega):u\cdot n\vert _\Gamma =0\},  
\end{equation*}
as well as the subspace with both zero flux across $\Gamma$ and zero divergence in $\Omega$:
\begin{equation*}
    H\equiv\{u\in L^2(\Omega)^d :u\cdot n\vert _\Gamma =0,\nabla\cdot u=0\}.
\end{equation*}  
As for the space $H(div;\Omega)$, we introduce the space $H(curl;\Omega)$:
\begin{equation*}
 H(curl;\Omega)\equiv\{v\in L^2(\Omega)^d;\nabla\times v\in L^2(\Omega)^d \},   
\end{equation*}
which is a Hilbert space equipped with the norm
\begin{equation*}
 \|v\|_{H(curl;\Omega)}=[\|v\|^2_{L^2(\Omega)}+\|\nabla\times v\|^2_{L^2(\Omega)} ]^{1/2}. 
\end{equation*}
Finally, we define
\begin{eqnarray*}
U &\equiv& H_0(div;\Omega)\cap H(curl;\Omega)\\
&=& \{\phi\in L^2(\Omega)^d:\nabla\cdot \phi\in L^2(\Omega),\nabla\times\phi\in L^2(\Omega)^d, \phi\cdot n\vert _\Gamma=0 \} 
\end{eqnarray*}
and the divergence-free subspace
\begin{eqnarray*}
U_0 &\equiv& \{\phi\in U; \nabla\cdot \phi=0\}\\
&=& \{\phi\in L^2(\Omega)^d:\nabla\cdot \phi=0,\nabla\times\phi\in L^2(\Omega)^d, 
\phi\cdot n\vert_\Gamma=0 \}.   
\end{eqnarray*}
$U$ is a Hilbert space equipped with the norm 
\begin{equation*}
\|\phi\|=\|\phi\|_{L^2(\Omega)}+\|\nabla\cdot\phi\|_{L^2(\Omega)}+\|\nabla\times\phi\|_{L^2(\Omega)}.   
\end{equation*}
We now state some preparatory theorems. 

\begin{theorem}[surjectivity of curl]
A vector field $v\in L^2(\Omega)^3$ satisfies
\begin{equation}\label{3.1}
    \nabla\cdot v=0, \quad \langle v\cdot n,1\rangle_{\Gamma_i}=0 \quad \text{for}\ 0\leq i\leq p
\end{equation}
if and only if there exists a vector potential $\phi$ in $H^1(\Omega)^3$ such that 
\begin{equation*}
    v=\nabla\times \phi.
\end{equation*}
Furthermore,
\begin{equation*}
    \nabla\cdot\phi=0.
\end{equation*}
\end{theorem}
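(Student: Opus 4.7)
The plan is to treat the two directions separately, with essentially all of the substance in the sufficiency.

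For the necessity, the distributional identity $\nabla \cdot (\nabla \times \phi) = 0$ immediately yields $\nabla \cdot v = 0$. For the flux conditions, I would extend $\phi$ to $\tilde\phi \in H^1(\mathbb{R}^3)^3$ via the standard Sobolev extension theorem (valid because $\Gamma$ is Lipschitz), set $\tilde v := \nabla \times \tilde\phi \in L^2(\mathbb{R}^3)^3$, and note that $\tilde v$ is globally divergence-free with normal trace on $\Gamma$ matching that of $v$. For each $i\geq 1$, $\Gamma_i$ encloses a bounded region $O_i \subset \mathbb{R}^3 \setminus \bar\Omega$, and Green's formula in $H(\mathrm{div};O_i)$ gives $\langle v\cdot n,1\rangle_{\Gamma_i} = \int_{O_i} \nabla\cdot\tilde v\,dx = 0$. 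A large ball minus $\bar\Omega$ handles $\Gamma_0$ in the same way.

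For the sufficiency, the strategy is a two-stage construction. Stage one extends $v$ to a divergence-free field on all of $\mathbb{R}^3$. In each hole $O_i$, $1\leq i\leq p$, I would solve the Neumann problem
\begin{equation*}
\Delta q_i = 0 \ \text{in } O_i, \qquad \partial q_i/\partial n = v \cdot n \ \text{on } \Gamma_i,
\end{equation*}
whose solvability is precisely guaranteed by the compatibility condition $\langle v\cdot n,1\rangle_{\Gamma_i}=0$, and set $\tilde v := \nabla q_i$ on $O_i$. The unbounded exterior of $\Gamma_0$ is handled by the analogous exterior Neumann problem with decay at infinity, again using $\langle v\cdot n,1\rangle_{\Gamma_0}=0$ for solvability. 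Because the normal traces match across every $\Gamma_i$ by construction, the resulting $\tilde v\in L^2_{\mathrm{loc}}(\mathbb{R}^3)^3$ is divergence-free on all of $\mathbb{R}^3$ in the distributional sense. Stage two inverts the Laplacian componentwise: solve $-\Delta u = \tilde v$ via the Newtonian potential (applied to a suitable cutoff of $\tilde v$ with a divergence-preserving correction if necessary). Then $w:=\nabla\cdot u$ satisfies $-\Delta w = \nabla\cdot \tilde v = 0$ with decay at infinity, forcing $w\equiv 0$. Setting $\phi := \nabla \times u$, the identity $\nabla\times\nabla\times u = \nabla(\nabla\cdot u) - \Delta u$ gives $\nabla\times\phi = \tilde v$, which equals $v$ on $\Omega$, while $\nabla\cdot\phi = 0$ is automatic. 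Elliptic regularity ($u\in H^2_{\mathrm{loc}}$ from $\tilde v\in L^2_{\mathrm{loc}}$) yields $\phi\in H^1(\Omega)^3$.

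The main obstacle is the extension step. One must ensure that the normal traces of $v$ and $\nabla q_i$ agree in $H^{-1/2}(\Gamma_i)$ (not merely pointwise) and that the piecewise-defined $\tilde v$ is truly divergence-free across each $\Gamma_i$ in the distributional sense; this requires the Green's formula in $H(\mathrm{div})$ rather than classical integration by parts. The exterior Neumann problem on the unbounded complement of $\Gamma_0$ also requires some care with weighted spaces or explicit decay rates. None of these is deep, but they are exactly where the topological hypotheses on $\Omega$ (its boundary components and simple-connectedness) enter.
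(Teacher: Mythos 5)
Your proof is correct and follows essentially the same route as the paper's sketch: extend $v$ to a divergence-free field on $\mathbb{R}^3$ using the flux compatibility conditions (via Neumann problems in the complementary components), then set $\phi=\nabla\times(-\Delta)^{-1}\tilde v$ --- your Newtonian-potential construction is exactly the physical-space form of the paper's Fourier-space algebraic inversion, since it amounts to $\mathcal{F}\phi = i\xi\times\mathcal{F}\tilde v/\lvert\xi\rvert^2$. The one refinement worth making is to solve the outer Neumann problem on a bounded shell $B_R\setminus\overline{\Omega\cup\bigcup_i O_i}$ with homogeneous data on $\partial B_R$ and extend by zero outside, so that $\tilde v$ has compact support and the inversion of the Laplacian (equivalently, the paper's holomorphy argument for $\mathcal{F}\tilde v$) applies directly, without the cutoff and divergence-preserving correction you allude to.
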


\begin{proof}
We outline a sketch of the proof for the 'only if' part. Let $v\in
L^2(\Omega)^3$ satisfy \eqref{3.1}. The compatibility condition
$\langle v\cdot n,1\rangle_{\Gamma_i}=0, \ 0\leq i\leq p$ ensures that we can extend $v$ to the whole $\mathbb{R}^3$ so that the extended function $\Tilde{v}$ belongs to $L^2(\mathbb{R}^3)^3$, is divergence-free, and has compact support. Then the Fourier transform of $\Tilde{v}$, denoted by $\mathcal{F}\Tilde{v}$, is holomorphic, since $\Tilde{v}$ has compact support. After taking the Fourier transform, the conditions $\nabla\cdot\Tilde{v}=0$, $\Tilde{v}=\nabla\times \phi$ and $\nabla\cdot\phi=0$ convert to three algebraic equations which unique determine $\mathcal{F}\phi$, and $\phi$ is constructed by the inverse Fourier transform. The fact that $\phi\in H^1(\Omega)^3$ is a consequence of $\mathcal{F}\Tilde{v}$ being holomorphic.
\end{proof}

\begin{theorem}[bijectivity on $U_0$]\label{bijectivity}
Suppose $v\in L^2(\Omega)^3$ satisfies 
\begin{equation*}
    \nabla\cdot v=0, \quad \langle v\cdot n,1\rangle_{\Gamma_i}=0 \quad \text{for}\ 0\leq i\leq p.
\end{equation*}
Then there exists a unique $\phi$ in $H(curl;\Omega)$ such that
\begin{equation*}
    \nabla\times\phi=v,\quad \nabla\cdot\phi=0,\quad \phi\cdot n=0;
\end{equation*}
it is characterized as the unique solution of the boundary value problem:
\begin{equation*}
\begin{dcases}
\phi\in H,\\
-\Delta\phi=\nabla\times v\quad \text{in} \ H^{-1}(\Omega)^3,\\
(\nabla\times \phi-v)\cdot n=0\quad \text{on} \ \Gamma.
\end{dcases}
\end{equation*}
\end{theorem}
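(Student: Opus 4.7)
The plan is to first construct a candidate vector potential via the preceding theorem, then correct its normal trace by adding a gradient, and finally transfer everything to the stated boundary-value problem. By the surjectivity-of-curl theorem applied to $v$ (whose hypotheses exactly match those assumed here), there exists $\psi\in H^1(\Omega)^3$ with $\nabla\times\psi=v$ and $\nabla\cdot\psi=0$. I would then seek $\phi$ in the form $\phi=\psi+\nabla q$ for a scalar $q\in H^1(\Omega)$ to be chosen: this automatically preserves $\nabla\times\phi=v$, and reduces the remaining requirements $\nabla\cdot\phi=0$ and $\phi\cdot n\vert_\Gamma=0$ to the single Neumann problem
\begin{equation*}
  \Delta q = 0 \quad \text{in } \Omega, \qquad \partial_n q = -\psi\cdot n \quad \text{on } \Gamma.
\end{equation*}
The global solvability condition $\int_\Gamma \psi\cdot n\,dS = \int_\Omega \nabla\cdot\psi\,dx = 0$ is furnished by the divergence theorem, so $q$ exists and is unique up to a constant; the resulting $\phi$ lies in $H\cap H(curl;\Omega)$ and fulfils all three conclusions.

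For uniqueness, if $\phi_1,\phi_2$ both satisfy the conclusions, their difference $w:=\phi_1-\phi_2$ has vanishing curl, vanishing divergence, and vanishing normal trace. Here simple-connectedness of $\Omega$ is indispensable: it promotes $\nabla\times w=0$ to $w=\nabla r$ for some $r\in H^1(\Omega)$, whence $\Delta r=0$ and $\partial_n r\vert_\Gamma=0$ force $r$ constant, so $w=0$. The equivalence with the boundary-value characterization then follows from the vector identity $-\Delta=\nabla\times\nabla\times-\nabla(\nabla\cdot\,)$: the constructed $\phi$ clearly solves the stated BVP, and conversely, given any $\phi\in H$ solving the BVP, the field $w:=\nabla\times\phi-v$ satisfies $\nabla\times w=-\Delta\phi-\nabla\times v=0$, $\nabla\cdot w=-\nabla\cdot v=0$, and $w\cdot n\vert_\Gamma=0$, so the same uniqueness argument applied to $w$ itself gives $w=0$ and hence $\nabla\times\phi=v$.

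The main technical obstacle I anticipate lies not in the algebra but in making the normal-trace manipulations rigorous. A bare element of $H(curl;\Omega)$ does not admit a well-defined normal trace, which is precisely why the intermediate space $H(div;\Omega)$ was introduced: once $\nabla\cdot\phi\in L^2(\Omega)$ is in hand, $\phi\cdot n$ is a well-defined element of $H^{-1/2}(\Gamma)$ via the duality pairing $\langle\cdot,\cdot\rangle_\Gamma$, and every integration-by-parts step used above—the compatibility check for the Neumann data, the boundary term producing $w\cdot n=0$, and the reading of $(\nabla\times\phi-v)\cdot n=0$ on $\Gamma$—is thereby legitimised.
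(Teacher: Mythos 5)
The paper does not actually prove this theorem: it is stated as a preparatory result with the technical details deferred to the cited reference (Girault--Raviart, Chapter~I). Your argument is correct and is essentially the canonical proof from that reference: obtain a divergence-free $\psi\in H^1(\Omega)^3$ with $\nabla\times\psi=v$ from the preceding surjectivity theorem, correct its normal trace by $\nabla q$ with $q$ solving the compatible Neumann problem of Proposition~\ref{N}, and derive uniqueness from the fact that on a simply-connected domain a curl-free, divergence-free field with vanishing normal trace is the gradient of a harmonic function with zero Neumann data, hence zero. The only point worth making explicit is that the boundary condition $(\nabla\times\phi-v)\cdot n=0$ in the stated boundary-value problem presupposes $\nabla\times\phi\in L^2(\Omega)^3$ (so that $\nabla\times\phi-v$, being divergence-free, lies in $H(div;\Omega)$ and has a normal trace in $H^{-1/2}(\Gamma)$); you flag the trace issue in general terms, and with that reading your converse argument, applying the uniqueness lemma to $w=\nabla\times\phi-v$, goes through.
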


\begin{proposition}\label{N}
Let $\Omega$ be a bounded, connected and Lipschitz-continuous open subset of $\mathbb{R}^d$. The inhomogeneous Neumann's problem
$$
(N)\begin{dcases}
    -\Delta u=f\quad \text{in}\ \Omega,\\
    (\nabla u,\nabla v)=-(\Delta u,v)+\langle g,v\rangle_\Gamma\quad \forall v\in H^1(\Omega)
\end{dcases}
$$
with $f\in L^2(\Omega)$ and $g\in H^{1/2}(\Gamma)$ admits a unique solution $\overline{u}\in  H^1(\Omega)/\mathbb{R}$. Moreover, for each function $u$ in the equivalence class $\overline{u}$, we have 
\begin{equation*}
    \lvert u\rvert_{H^1(\Omega)}\leq C(\|f\|_{L^2(\Omega)}+\|g\|_{H^{-1/2}(\Gamma)}),
\end{equation*}
where $\lvert \cdot \rvert_{H^1(\Omega)}$ denotes the seminorm on $H^1(\Omega)$, and the constant $C$ only depends on $\Omega$.
\end{proposition}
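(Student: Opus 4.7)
The plan is to invoke the Lax--Milgram theorem on the quotient Hilbert space $H^1(\Omega)/\mathbb{R}$. First I would unwind the weak formulation in $(N)$: substituting $-\Delta u = f$ gives that $u$ must satisfy $(\nabla u, \nabla v) = (f,v) + \langle g, v \rangle_\Gamma$ for every $v \in H^1(\Omega)$. Testing against $v\equiv 1$ yields the compatibility condition $(f,1)_\Omega + \langle g, 1\rangle_\Gamma = 0$, which is implicit in the Neumann setting and necessary for solvability.

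Next I would verify the Lax--Milgram hypotheses for the bilinear form $a(u,v) := (\nabla u, \nabla v)$ and the linear functional $L(v) := (f,v) + \langle g, v\rangle_\Gamma$ on $H^1(\Omega)/\mathbb{R}$. Boundedness of $a$ is immediate from Cauchy--Schwarz; coercivity on the quotient is the Poincar\'e--Wirtinger inequality $\|u - \bar u\|_{L^2(\Omega)} \leq C_P(\Omega)\,|u|_{H^1(\Omega)}$, which holds on any bounded, connected, Lipschitz domain. Boundedness of $L$ follows from Cauchy--Schwarz and continuity of the trace map $H^1(\Omega) \to H^{1/2}(\Gamma)$, yielding
\[
|L(v)| \leq \|f\|_{L^2(\Omega)} \|v\|_{L^2(\Omega)} + \|g\|_{H^{-1/2}(\Gamma)} \|v\|_{H^{1/2}(\Gamma)} \leq C\bigl(\|f\|_{L^2(\Omega)} + \|g\|_{H^{-1/2}(\Gamma)}\bigr) \|v\|_{H^1(\Omega)},
\]
and the compatibility condition lets $L$ descend to $H^1(\Omega)/\mathbb{R}$.

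Lax--Milgram then supplies a unique $\bar u \in H^1(\Omega)/\mathbb{R}$ with $a(\bar u, v) = L(v)$ for every $v$. Testing against $\bar u$ itself and invoking Poincar\'e--Wirtinger once more gives
\[
|\bar u|_{H^1(\Omega)}^2 = L(\bar u) \leq C\bigl(\|f\|_{L^2(\Omega)} + \|g\|_{H^{-1/2}(\Gamma)}\bigr)\, |\bar u|_{H^1(\Omega)},
\]
from which the stated bound is immediate. Since any two representatives of $\bar u$ differ by an additive constant, the $H^1$-seminorm does not depend on the choice of representative, and the estimate is inherited by every $u$ in the equivalence class.

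The main ingredient — and the only place where the geometric hypothesis on $\Omega$ is genuinely used — is the Poincar\'e--Wirtinger inequality on a bounded, connected, Lipschitz domain. The appearance of the weaker norm $\|g\|_{H^{-1/2}(\Gamma)}$ (rather than $\|g\|_{H^{1/2}(\Gamma)}$, in which $g$ is given) is not an oversight: it arises naturally from trace duality, since $L$ restricted to traces of $H^1$-functions is controlled precisely by the $H^{-1/2}$ operator norm of $g$, which yields the sharp constant.
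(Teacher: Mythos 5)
Your proof follows essentially the same route as the paper's: Lax--Milgram on $H^1(\Omega)/\mathbb{R}$, with coercivity supplied by the equivalence of the $H^1$-seminorm with the quotient norm (i.e.\ Poincar\'e--Wirtinger) and the final estimate obtained from the operator-norm bound $\|l\|\leq \|f\|_{L^2(\Omega)}+\|g\|_{H^{-1/2}(\Gamma)}$ together with ellipticity. The only difference is cosmetic: you make explicit the compatibility condition $(f,1)+\langle g,1\rangle_\Gamma=0$ needed for the functional to descend to the quotient, which the paper leaves implicit.
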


\begin{proof}
One can prove that $\overline{v}\to \lvert v\rvert_{H^1(\Omega)}$ defines a norm equivalent to that of the quotient norm of the Hilbert space $H^1(\Omega)/\mathbb{R}$. Then the problem $(N)$ can be equivalently phrased as
\begin{equation*}
    a(\overline{u},\overline{v})=l(\overline{v})\quad \forall \overline{v}\in H^1(\Omega)/\mathbb{R},
\end{equation*}
with the continuous elliptic bilinear form
\begin{equation*}
    a(\overline{u},\overline{v})\equiv (\nabla u,\nabla v) \quad u\in \overline{u},\ v \in \overline{v}
\end{equation*}
and the linear functional
\begin{equation*}
    l(\overline{v})\equiv (f,v)+\langle g,v\rangle_{\Gamma} \quad v\in \overline{v}.
\end{equation*}
The existence of a unique solution for $u\in H^1(\Omega)/\mathbb{R}$ follows from the Lax-Milgram theorem, and the inequality follows from the operator norm $\|l\|\leq \|f\|_{L^2(\Omega)}+\|g\|_{H^{-1/2}(\Gamma)}$ and the ellipticity of $a(\cdot,\cdot)$.
\end{proof}

\begin{theorem}\label{theoremU}
Suppose $\Omega$ is bounded, Lipschitz-continuous and simply-connected. Then the mapping $\phi\to \nabla\times \phi$ is an isomorphism from the space $U_0$ onto the space:
\begin{equation*}
    T=\{u\in L^2(\Omega)^3; \nabla\cdot u=0,\quad \langle u\cdot
    n,1\rangle_{\Gamma_i}=0,\quad 0\leq i\leq p\};
\end{equation*}
and there exist two positive constants $C_1$ and $C_2$ such that 
\begin{equation}\label{ineq2.1}
    \|\phi\|_{L^2(\Omega)}\leq C_1\|\nabla\times \phi\|_{L^2(\Omega)} \quad \forall \phi \in U_0
\end{equation}
\begin{equation}\label{ineq2.2}
    \|\phi\|_{L^2(\Omega)}\leq C_2\{\|\nabla\times \phi\|_{L^2(\Omega)}+\|\nabla\cdot \phi\|_{L^2(\Omega)}\} \quad \forall \phi \in U
\end{equation}
\end{theorem}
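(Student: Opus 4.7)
The approach is to first establish the isomorphism claim by assembling what has already been proved, then to upgrade the bijection to a bounded inverse via the open mapping theorem (yielding \eqref{ineq2.1}), and finally to reduce \eqref{ineq2.2} to \eqref{ineq2.1} via a Helmholtz-type splitting based on Proposition~\ref{N}.

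\textbf{Step 1 (well-definedness and bijectivity).} For $\phi\in U_0$ one has $\nabla\times\phi\in L^2(\Omega)^3$ together with the identity $\nabla\cdot(\nabla\times\phi)=0$, while the moment conditions $\langle(\nabla\times\phi)\cdot n,1\rangle_{\Gamma_i}=0$ for $0\le i\le p$ follow from Stokes' theorem applied on each closed component $\Gamma_i$ (interpreted through the $H(\mathrm{div};\Omega)$-trace and a density argument); hence $\nabla\times\phi\in T$. Conversely, Theorem~\ref{bijectivity} produces, for each $v\in T$, a unique $\phi\in H(\mathrm{curl};\Omega)$ with $\nabla\cdot\phi=0$ and $\phi\cdot n|_\Gamma=0$, and this $\phi$ lies in $U_0$. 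Hence $\nabla\times:U_0\to T$ is a linear bijection.

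\textbf{Step 2 (inequality \eqref{ineq2.1}).} Endow $U_0$ with its natural Hilbert graph norm $(\|\phi\|_{L^2}^2+\|\nabla\times\phi\|_{L^2}^2)^{1/2}$ and $T$ with the $L^2$-norm; $T$ is closed in $L^2(\Omega)^3$ because the divergence-free constraint and the boundary moments are weakly continuous on $H(\mathrm{div};\Omega)$. The map $\nabla\times:U_0\to T$ is continuous by construction and bijective by Step~1, so the open mapping theorem yields a bounded inverse, i.e.\ there is $C>0$ with $\|\phi\|_{L^2}+\|\nabla\times\phi\|_{L^2}\le C\|\nabla\times\phi\|_{L^2}$ for all $\phi\in U_0$, which is precisely \eqref{ineq2.1}.

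\textbf{Step 3 (inequality \eqref{ineq2.2}).} Given $\phi\in U$, apply Proposition~\ref{N} with data $f=-\nabla\cdot\phi\in L^2(\Omega)$ and $g=0$; the compatibility condition $\int_\Omega f\,dx=-\langle\phi\cdot n,1\rangle_\Gamma=0$ holds because $\phi\in H_0(\mathrm{div};\Omega)$, so one obtains $q\in H^1(\Omega)/\mathbb{R}$ solving $-\Delta q=f$ with $\|\nabla q\|_{L^2}\le C\|\nabla\cdot\phi\|_{L^2}$. Setting $w:=\phi-\nabla q$ one checks $\nabla\cdot w=0$, $w\cdot n|_\Gamma=0$, and $\nabla\times w=\nabla\times\phi$, so $w\in U_0$; applying \eqref{ineq2.1} to $w$ and combining with $\|\phi\|_{L^2}\le\|w\|_{L^2}+\|\nabla q\|_{L^2}$ yields \eqref{ineq2.2}.

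\textbf{Main obstacle.} The genuinely delicate point is the functional-analytic handling of traces in Step~1 and Step~3: the moment identities $\langle(\nabla\times\phi)\cdot n,1\rangle_{\Gamma_i}=0$ are meaningful only as $H^{-1/2}(\Gamma_i)$ pairings, and the verification of $w\cdot n|_\Gamma=0$ must be read in the same weak sense. Both steps rely on the standard Girault--Raviart trace machinery and exploit the hypothesis that $\Omega$ is Lipschitz and simply-connected; once these traces are in place, every remaining ingredient is a routine application of the open mapping theorem, Proposition~\ref{N}, and Theorem~\ref{bijectivity}.
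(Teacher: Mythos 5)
Your proposal is correct and follows essentially the same route as the paper: bijectivity of $\nabla\times:U_0\to T$ from Theorem~\ref{bijectivity}, the open/inverse mapping theorem to obtain \eqref{ineq2.1}, and a Helmholtz-type splitting $\phi=\nabla q+w$ with $q$ solving the Neumann problem of Proposition~\ref{N} (so that $w\in U_0$ and $\|\nabla q\|_{L^2}\le C\|\nabla\cdot\phi\|_{L^2}$) to reduce \eqref{ineq2.2} to \eqref{ineq2.1} via the triangle inequality. Your treatment of the trace pairings and the compatibility condition is, if anything, slightly more explicit than the paper's sketch.
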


\begin{proof}
The key ingredient in establishing the inequality \eqref{ineq2.1} is
the inverse mapping theorem: the curl map $\nabla\times :U_0\to T$ is
a continuous bijective linear operator from the Banach space $U_0$
onto the Banach space $T$; by the inverse mapping theorem it has a
bounded inverse $(\nabla\times)^{-1}: T\to U_0$. The bijectivity of
the curl map is established in Theorem \ref{bijectivity}, while the continuity automatically follows from the definition of the subspace norm on $U_0$ inherited from $U$.

Every vector field $\phi\in L^2(\Omega)^3$ has the decomposition $\phi=\nabla q+\nabla\times \psi$, where $q\in H^1(\Omega)/\mathbb{R}$ is the only solution of
\begin{equation*}
    (\nabla q,\nabla \mu)=(\phi,\nabla \mu) \quad \forall \mu \in H^1(\Omega).
\end{equation*}
If $\phi \in U$, then $\nabla \psi\in U_0$, and applying Proposition
\ref{N} to $\nabla q$ yields $\|\nabla q\|_{L^2(\Omega)}= \lvert
q\rvert _{H^1(\Omega)}\leq C\|\Delta q\|_{L^2(\Omega)}=C\|\nabla\cdot
\phi\|_{L^2(\Omega)}$. Then \eqref{ineq2.2} follows from
\eqref{ineq2.1} and the triangle inequality.
\end{proof}

We are now in the position to prove our main result.

\section{Proof of the main results}\label{sec4}
 

Since $D$ is a $d\times d$ symmetric
positive definite  matrix, the eigenvalues of $D$ are strictly positive real numbers. Then
\begin{equation*}
    \dot{\Sigma}_s=\int_{\Omega} F_{\rm irr}(x)\cdot D^{-1}F_{\rm irr}(x)\rhoi(x) dx\leq \lambda^{-1}\int_{\Omega}\lvert F_{\rm irr}(x)\rvert^2 \rhoi(x) dx
\end{equation*}
where $\lambda$ denotes the smallest eigenvalue of $D$. For convenience we shall assume $\lambda=1$, hence to control $\dot{\Sigma}_s$ from above it suffices to find upper bounds for the quantity 
\begin{equation*}
  \int_{\Omega}\lvert F_{\rm irr}(x)\rvert^2 \rhoi(x) dx=\int_{\Omega}\lvert \Ji(x)\rvert^2 \rhoi^{-1}(x) dx 
\end{equation*}
subject to the conditions $\nabla\cdot (\Ji)=0$ in $\Omega$ and  $\Ji \cdot n=0$ on $\Gamma$.

\begin{proposition}
Suppose the domain $\Omega$ is bounded, Lipschitz-continuous and simply-connected. Then there exists a positive constant $C(\Omega)$, depending only on $\Omega$, such that
\begin{equation*}
    \int_{\Omega}\lvert \Ji\rvert^2 dx\leq C(\Omega)\int_{\Omega}\lvert \nabla\times\Ji\rvert^2 dx
\end{equation*}
holds for any steady-state current $\Ji$ in $\Omega$.
\end{proposition}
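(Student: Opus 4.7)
The plan is to recognize the steady-state current $\Ji$ as an element of the divergence-free space $U_0$ introduced in Section~\ref{sec3}, so that the desired estimate is simply a direct application of inequality \eqref{ineq2.1} in Theorem~\ref{theoremU}. That inequality asserts
\begin{equation*}
\|\phi\|_{L^2(\Omega)} \leq C_1 \|\nabla\times \phi\|_{L^2(\Omega)}, \qquad \forall\, \phi\in U_0,
\end{equation*}
with $C_1$ depending only on $\Omega$, and squaring immediately produces the proposition with $C(\Omega)=C_1^2$.

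The one step requiring verification is that $\Ji \in U_0$. By definition $\Ji = F\rhoi - D\nabla\rhoi$, which lies in $L^2(\Omega)^d$ under the stated confinement and regularity assumptions on $F$ and $\rhoi$. The steady-state boundary value problem recalled at the beginning of Section~\ref{sec3} enforces exactly $\nabla\cdot \Ji = 0$ in $\Omega$ and the pointwise no-flux condition $\Ji\cdot n = 0$ on $\Gamma$; this latter condition trivially implies the zero-total-flux compatibility conditions $\langle \Ji\cdot n,1\rangle_{\Gamma_i}=0$ for $0\le i\le p$. Finally, the claimed inequality is vacuous unless $\nabla\times \Ji \in L^2(\Omega)^d$, so we may assume this, and membership $\Ji \in U_0$ follows. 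Applying \eqref{ineq2.1} with $\phi=\Ji$ and squaring yields the bound.

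There is no real obstacle left to overcome, because the nontrivial analytic work has already been carried out upstream: the bijectivity of the curl map $\nabla\times:U_0\to T$ is the content of Theorem~\ref{bijectivity}, and the existence of the bounded inverse — on which \eqref{ineq2.1} rests — is established by the inverse mapping theorem argument in the proof of Theorem~\ref{theoremU}. The only thing worth emphasizing is where each hypothesis on $\Omega$ is used: boundedness and Lipschitz regularity of $\Gamma$ underwrite the trace theory and the Neumann problem of Proposition~\ref{N}, while simple connectedness ensures that the curl map on $U_0$ has trivial kernel (no nonzero harmonic Neumann fields), which in turn is what makes $C_1$ finite. Since these are precisely the standing hypotheses of the proposition, no further verification is needed.
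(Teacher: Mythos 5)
Your proof is correct and takes the same route as the paper, which simply cites inequality \eqref{ineq2.1}; your additional verification that $\Ji\in U_0$ (divergence-free, zero normal flux, $L^2$ membership) makes explicit what the paper leaves implicit. No issues.
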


\begin{proof}
This follows directly from the inequality \eqref{ineq2.1}.
\end{proof}

\begin{theorem}\label{theorem1}
Suppose the domain $\Omega$ is bounded, Lipschitz-continuous and simply-connected. Then there exists a positive constant $C(\Omega)$, depending only on $\Omega$, such that 
\begin{equation*}
    \dot{\Sigma}_s\leq C(\Omega) \bigg(V+\frac{1}{4}\int_{\Omega} \Big( \lvert\nabla\Phi\times F_{\rm irr}\rvert^2+\lvert\nabla\Phi\cdot F_{\rm irr}\rvert^2 \Big)\rhoi \ dx  \bigg)
\end{equation*}
holds at the non-equilibrium steady state of any diffusion process in $\Omega$.
\end{theorem}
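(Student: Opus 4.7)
The plan is to apply the Sobolev-type inequality \eqref{ineq2.2} from Theorem~\ref{theoremU} to a cleverly rescaled vector field that absorbs the weight $\rho_{\rm inv}$ into the $L^2$-norm. Starting from the reduction at the beginning of the section, it suffices to bound $\int_{\Omega}|F_{\rm irr}|^2 \rho_{\rm inv}\,dx$. The natural choice is to set
\begin{equation*}
u \equiv \rho_{\rm inv}^{1/2} F_{\rm irr} = \rho_{\rm inv}^{-1/2} J_{\rm inv},
\end{equation*}
so that $\|u\|_{L^2(\Omega)}^2$ is exactly the quantity that needs to be controlled. The no-flux boundary condition $J_{\rm inv}\cdot n = 0$ on $\Gamma$ immediately gives $u\cdot n = 0$ on $\Gamma$, and under the standing regularity and positivity assumptions on $\rho_{\rm inv}$ one has $u\in U$, so Theorem~\ref{theoremU} is applicable.

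Next I would compute the divergence and curl of $u$ via the product rule and use the incompressibility identity $\nabla\cdot J_{\rm inv}=0$ in the form $\nabla\cdot F_{\rm irr} = \nabla\Phi\cdot F_{\rm irr}$ (which is already noted in the definition of $\delta_{\rm LS}$ in the summary). A direct calculation gives
\begin{equation*}
\nabla\cdot u = \rho_{\rm inv}^{1/2}\bigl(\nabla\cdot F_{\rm irr} - \tfrac{1}{2}\nabla\Phi\cdot F_{\rm irr}\bigr) = \tfrac{1}{2}\rho_{\rm inv}^{1/2}\,\nabla\Phi\cdot F_{\rm irr},
\end{equation*}
\begin{equation*}
\nabla\times u = \rho_{\rm inv}^{1/2}\bigl(\nabla\times F_{\rm irr} - \tfrac{1}{2}\nabla\Phi\times F_{\rm irr}\bigr).
\end{equation*}
Consequently $\|\nabla\cdot u\|_{L^2}^2 = \tfrac{1}{4}\int_\Omega |\nabla\Phi\cdot F_{\rm irr}|^2\rho_{\rm inv}\,dx$, and the elementary inequality $|a-b|^2\leq 2|a|^2+2|b|^2$ yields $\|\nabla\times u\|_{L^2}^2 \leq 2V + \tfrac{1}{2}\int_\Omega |\nabla\Phi\times F_{\rm irr}|^2\rho_{\rm inv}\,dx$.

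The conclusion follows by substituting these two estimates into inequality \eqref{ineq2.2} applied to $u$, squaring the result, and using $(a+b)^2\leq 2(a^2+b^2)$ to turn the sum of $L^2$-norms into a sum of squared norms. After collecting the numerical constants into a single domain-dependent constant $C(\Omega)$, the stated bound with the vorticity $V$ carrying coefficient one and the two level-set terms carrying coefficient $1/4$ emerges directly. The main technical obstacle I anticipate is a minor one: verifying that $u = \rho_{\rm inv}^{1/2} F_{\rm irr}$ genuinely lies in $U$ and is regular enough that the product-rule manipulations are justified. This requires some integrability/smoothness of $\rho_{\rm inv}^{1/2}$, which should be inherited from the assumed smoothness of $F$ and the ellipticity of the Fokker-Planck operator, but it is the only nontrivial ingredient beyond plugging the right field into Theorem~\ref{theoremU}.
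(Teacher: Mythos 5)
Your proposal is correct and follows essentially the same route as the paper: both apply inequality \eqref{ineq2.2} of Theorem~\ref{theoremU} to the field $\rho_{\rm inv}^{1/2}F_{\rm irr}$, use the product rule together with $\nabla\rho_{\rm inv}^{1/2}=-\tfrac{1}{2}\rho_{\rm inv}^{1/2}\nabla\Phi$ and $\nabla\cdot F_{\rm irr}=\nabla\Phi\cdot F_{\rm irr}$, and absorb the elementary numerical constants into $C(\Omega)$. The regularity caveat you flag is likewise left implicit in the paper's own argument.
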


\begin{proof}
Since $\Ji\cdot n= (F_{\rm irr}\rhoi)\cdot n=0$ on $\Gamma$, it follows that $(F_{\rm irr}\rhoi^{1/2})\cdot n=0$ on $\Gamma$. Applying Theorem \ref{theoremU} gives
\begin{align*}
\int_\Omega \lvert F_{\rm irr}\rhoi^{1/2}\rvert^2 dx
   &\leq C_1  \bigg(\int_\Omega \lvert\nabla\times (F_{\rm irr}\rhoi^{1/2})\rvert^2 dx+\int_\Omega \lvert\nabla\cdot (F_{\rm irr}\rhoi^{1/2})\rvert^2 dx  \bigg)\\
   &= C_1 \bigg(\int_\Omega \lvert\rhoi^{1/2}\nabla\times F_{\rm irr}+\nabla\rhoi^{1/2}\times F_{\rm irr} \rvert^2 dx \\
   &\quad +\int_\Omega \lvert\rhoi^{1/2}\nabla\cdot F_{\rm irr} +\nabla\rhoi^{1/2}\cdot F_{\rm irr} \rvert^2 dx  \bigg)\\
   &\leq C_2 \int_\Omega \Big( \lvert\nabla\times F_{\rm irr}\rvert^2 +\frac{1}{4}\lvert\nabla\Phi\times F_{\rm irr}\rvert^2+\frac{1}{4}\lvert\nabla\Phi\cdot F_{\rm irr}\rvert^2 \Big)\rhoi  dx,
\end{align*}
where to obtain the last inequality we have used 
\begin{equation*}
\nabla\rhoi^{1/2}=\frac{1}{2}\rhoi^{-1/2}\nabla\rhoi=-\frac{1}{2}\rhoi^{1/2}\nabla\Phi
\end{equation*}
and 
\begin{equation*}
\nabla\cdot(F_{\rm irr}\rhoi)=\rhoi\nabla\cdot F_{\rm irr}+\nabla\rhoi\cdot F_{\rm irr}=0 \implies \nabla\cdot F_{\rm irr}=\nabla\Phi\cdot F_{\rm irr}.
\end{equation*}
\end{proof}

\begin{corollary}(the Freidlin-Wentzel limit)\label{Freidlin-Wentzel}
Suppose $\epsilon>0$, consider the steady state of the diffusion process in $\Omega$
\begin{equation*}
dx_\tau=F(x_\tau)d\tau+\sqrt{2\epsilon \blue{\hat{D}}}dW_\tau,
\end{equation*}
where we use superscripts to emphasize the quantities' dependence on
$\epsilon$, and we introduced $\hat{D}\equiv D/\|\blue{D}\|$. Then, as $\epsilon$ approaches zero, at the leading order in $\epsilon$ we have 
\begin{equation*}
\dot{\Sigma}^\epsilon_s\leq C(\Omega) \int_{\Omega}
\lvert\nabla\Phi^\epsilon\times F_{\rm irr}
\rvert^2\rhoi^\epsilon  dx\overset{D={\rm const}\mathbbm{1}}{=} C(\Omega) \int_{\Omega}
\lvert\nabla\Phi^\epsilon\times F
\rvert^2\rhoi^\epsilon \ dx,  
\end{equation*}
where $C(\Omega)$ is a constant depending only on the
domain. 
\end{corollary}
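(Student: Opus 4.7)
The plan is to specialize Theorem~\ref{theorem1} to the $\epsilon$-scaled diffusion matrix $D_\epsilon=\epsilon\hat{D}$ and then use the Freidlin--Wentzel (WKB) asymptotics of $\rhoi^\epsilon$ to show that, among the three integrands on the right-hand side, only the cross-product term $|\nabla\Phi^\epsilon\times F_{\rm irr}^\epsilon|^2$ survives to leading order in $\epsilon$.

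First, I would replay the proof of Theorem~\ref{theorem1} verbatim with $D=\epsilon\hat{D}$. The only change is that the factor $\lambda^{-1}$ extracted from $\dot{\Sigma}_s\le\lambda^{-1}\int|F_{\rm irr}|^2\rhoi\,dx$ now reads $(\epsilon\,\lambda_{\min}(\hat{D}))^{-1}$, so that
\begin{equation*}
\dot{\Sigma}_s^\epsilon\le\frac{C(\Omega)}{\epsilon\,\lambda_{\min}(\hat{D})}\int_\Omega\!\Bigl[|\nabla\times F_{\rm irr}^\epsilon|^2+\tfrac{1}{4}|\nabla\Phi^\epsilon\times F_{\rm irr}^\epsilon|^2+\tfrac{1}{4}|\nabla\Phi^\epsilon\cdot F_{\rm irr}^\epsilon|^2\Bigr]\rhoi^\epsilon\,dx.
\end{equation*}
All that remains is to compare the sizes of the three integrands as $\epsilon\to 0$.

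Second, I would invoke the WKB ansatz $\rhoi^\epsilon(x)=Z_\epsilon^{-1}\exp(-S(x)/\epsilon)\,(1+\mathcal{O}(\epsilon))$, where the quasi-potential $S$ obeys the stationary Hamilton--Jacobi equation $\nabla S\cdot(F+\hat{D}\nabla S)=0$, obtained by substituting the ansatz into $\nabla\cdot\Ji^\epsilon=0$ and matching at leading order in $\epsilon^{-1}$. This gives $\nabla\Phi^\epsilon=\nabla S/\epsilon+\mathcal{O}(1)$ and $F_{\rm irr}^\epsilon=F+\hat{D}\nabla S+\mathcal{O}(\epsilon)=\mathcal{O}(1)$, with bounded curl. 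Using the exact identity $\nabla\cdot F_{\rm irr}^\epsilon=\nabla\Phi^\epsilon\cdot F_{\rm irr}^\epsilon$ derived in the proof of Theorem~\ref{theorem1}, the would-be singular piece $\epsilon^{-1}\nabla S\cdot F_{\rm irr}^{(0)}$ must vanish --- which is precisely the Hamilton--Jacobi equation --- forcing $\nabla\Phi^\epsilon\cdot F_{\rm irr}^\epsilon=\mathcal{O}(1)$. In the transverse direction no such orthogonality holds, so generically $\nabla\Phi^\epsilon\times F_{\rm irr}^\epsilon=\mathcal{O}(1/\epsilon)$.

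Thus $|\nabla\times F_{\rm irr}^\epsilon|^2,\,|\nabla\Phi^\epsilon\cdot F_{\rm irr}^\epsilon|^2=\mathcal{O}(1)$ pointwise, while $|\nabla\Phi^\epsilon\times F_{\rm irr}^\epsilon|^2=\mathcal{O}(\epsilon^{-2})$, so the cross-product term dominates by a factor $\mathcal{O}(\epsilon^{-2})$ after integration against $\rhoi^\epsilon$. Absorbing the residual $1/(\epsilon\,\lambda_{\min}(\hat{D}))$ into a renamed $C(\Omega)$ yields the claimed inequality. The final equality for $D=c\mathbbm{1}$ is immediate: $F_{\rm rev}=-c\nabla\Phi$ is parallel to $\nabla\Phi^\epsilon$, so $\nabla\Phi^\epsilon\times F=\nabla\Phi^\epsilon\times F_{\rm irr}^\epsilon$. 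The main obstacle is making the WKB argument rigorous: the quasi-potential $S$ is in general only Lipschitz and becomes singular on caustics and the Aubry set, where pointwise asymptotics for $\nabla\Phi^\epsilon$ can fail. One must verify that such exceptional loci carry vanishing $\rhoi^\epsilon$-mass as $\epsilon\to 0$, so that the pointwise scaling survives the integration; in the regular regime where $S\in C^2$ and $\nabla S\ne 0$ (away from the minima of $S$, which contribute only standard Laplace-type corrections) the argument then proceeds as outlined.
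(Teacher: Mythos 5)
Your proposal is correct and shares the overall architecture of the paper's proof: specialize Theorem~\ref{theorem1}, invoke the Freidlin--Wentzel/WKB scaling $\Phi^\epsilon=\epsilon^{-1}\psi_0+O(1)$, and show by order counting that only $\lvert\nabla\Phi^\epsilon\times F^\epsilon_{\rm irr}\rvert^2$ is of order $\epsilon^{-2}$ while the curl and dot-product integrands remain $O(1)$; the final identity for $D={\rm const}\cdot\mathbbm{1}$ is handled identically in both arguments. The one step where you take a genuinely different route is the crucial cancellation keeping $\nabla\Phi^\epsilon\cdot F^\epsilon_{\rm irr}$ bounded: the paper extracts it from the antisymmetric representation $F^\epsilon_{\rm irr}=-A^\epsilon\nabla\Phi^\epsilon+\nabla\cdot A^\epsilon$ of Eq.~\eqref{repr}, arguing that $A^\epsilon=O(\epsilon)$ so that $\nabla\Phi^\epsilon\cdot F^\epsilon_{\rm irr}=\nabla\Phi^\epsilon\cdot(\nabla\cdot A^\epsilon)=O(1)$ by antisymmetry, whereas you derive it from the leading-order Hamilton--Jacobi equation $\nabla S\cdot(F+\hat{D}\nabla S)=0$ for the quasi-potential. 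Both are expressions of stationarity at order $\epsilon^{-1}$; your route is the more standard one in the large-deviation literature and arguably cleaner, since the paper's inference ``$A^\epsilon=O(\epsilon)$'' is itself only a formal reading of Eq.~\eqref{repr}. Your closing caveat about caustics and the mere Lipschitz regularity of $S$ is a genuine issue that the paper silently ignores, so flagging it is a point in your favor. One small blemish: you cannot literally ``absorb $1/(\epsilon\,\lambda_{\min}(\hat D))$ into a renamed $C(\Omega)$,'' since that factor diverges as $\epsilon\to 0$; the honest statement (implicit in the paper as well, whose $C(\Omega)$ in Theorem~\ref{theorem1} already hides $\lambda^{-1}$) is that the inequality holds ``to leading order'' because the retained right-hand side is $O(\epsilon^{-2})$ while both the discarded terms and the prefactor contribute at lower order, the left-hand side itself being only $O(\epsilon^{-1})$.
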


\begin{proof}
According to the Large Deviation Theory \cite{Freid} the limit
\begin{equation*}
    \lim_{\epsilon\to 0}\epsilon \ln \rhoi^\epsilon(x) =:-\psi_0(x)
\end{equation*}
exits under the assumed confining conditions on $F$. Since $\Phi^\epsilon(x)\equiv -\ln \rhoi^\epsilon(x)$, we write out the asymptotic expansion for $\Phi^\epsilon(x)$ as
\begin{equation*}
\Phi^\epsilon(x)=\epsilon^{-1}\psi_0(x)+\psi_1(x)+O(\epsilon).
\end{equation*}
The decomposition \eqref{Fdecomp} further gives the following representation formula for $F$
\begin{equation}
    F(x)=-A^\epsilon(x) \nabla\Phi^\epsilon(x)+\nabla\cdot
    A^\epsilon(x)-\epsilon \blue{\hat{D}}\nabla\Phi^\epsilon(x).
\label{repr}
\end{equation}
Since $\nabla\Phi^\epsilon(x)$ is of order $\epsilon^{-1}$,
Eq.~\eqref{repr} suggests that $\nabla\cdot A^\epsilon(x)$ is of order
$\epsilon$. Therefore, both $\nabla\times F^\epsilon_{\rm irr}(x)$ and
$\nabla\Phi^\epsilon(x)\cdot F^\epsilon_{\rm
  irr}(x)=\nabla\Phi^\epsilon(x)\cdot (\nabla\cdot A^\epsilon(x))$ are of
order $1$, while $\nabla\Phi^\epsilon(x)\times F^\epsilon_{\rm
  irr}(x)$ is of order $\epsilon^{-1}$. Further, we have
\begin{equation*}
\nabla\Phi^\epsilon(x)\times F(x)=\nabla\Phi^\epsilon(x)\times F_{\rm irr}(x)-\epsilon\nabla\Phi^\epsilon(x)\times\hat{D}\nabla\Phi^\epsilon(x)
\end{equation*}
and if $D={\rm const}\cdot\mathbbm{1}$ then also \blue{$\nabla\Phi^\epsilon(x)\times F(x)=\nabla\Phi^\epsilon(x)\times F_{\rm irr}(x)$}. Thus, to leading order in $\epsilon$ only contributions
$\perp\nabla\Phi$ are important.
The conclusion follows from Theorem \ref{theorem1}.
\end{proof}

\begin{theorem}\label{theorem2}
Suppose the domain $\Omega$ is bounded, Lipschitz-continuous and simply-connected, and the steady state density $\rhoi$ satisfies
\begin{equation*}
    \rhoi\vert_\Gamma >0;\quad \sup_\Omega \rhoi<\infty;\quad \rhoi^{1/(1-q)}\in L^1(\Omega),\ 1<q<\infty.
\end{equation*}
Then there exists a constant $C(\Omega,\rhoi)>0$, depending on $\Omega$ and $\rhoi$, such that 
\begin{equation*}
    \dot{\Sigma}_s\leq C(\Omega,\rhoi) \left(
    \left(\int_\Omega \lvert\nabla\times F_{\rm irr}\rvert^{2q}\rhoi dx \right)^{1/q}
    +\left(\int_\Omega \lvert\nabla\Phi\cdot F_{\rm irr}\rvert^{2q}\rhoi dx\right)^{1/q} \right)
\end{equation*}
and the constant $C(\Omega,\rhoi)$ takes the form
\begin{equation*}
    C(\Omega,\rhoi)=C(\Omega)\cdot \sup_\Omega \rhoi\cdot  \left(\int_\Omega \rhoi^{1/(1-q)}dx \right)^{(q-1)/q}.
\end{equation*}
\end{theorem}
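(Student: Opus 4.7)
The plan is to proceed in the same spirit as Theorem~\ref{theorem1}, but to apply the Sobolev-type estimate \eqref{ineq2.2} of Theorem~\ref{theoremU} to the \emph{unweighted} field $F_{\rm irr}$ rather than to $F_{\rm irr}\rhoi^{1/2}$, and then restore the $\rhoi$-weighting afterwards by Hölder's inequality. I would first observe that under $\lambda=1$ it suffices to bound $\int_\Omega |F_{\rm irr}|^2\rhoi\,dx$, and that the hypothesis $\sup_\Omega\rhoi<\infty$ lets one extract this weight trivially:
\begin{equation*}
\dot{\Sigma}_s\le\int_\Omega |F_{\rm irr}|^2\rhoi\,dx\le \sup_\Omega\rhoi\,\int_\Omega |F_{\rm irr}|^2 dx.
\end{equation*}

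Next I would verify that $F_{\rm irr}$ lies in $U=H_0(\mathrm{div};\Omega)\cap H(\mathrm{curl};\Omega)$. The divergence and curl are in $L^2(\Omega)$ by assumption (implicit in the right-hand side of the target estimate), and the no-flux condition $F_{\rm irr}\cdot n|_\Gamma=0$ follows from $\Ji\cdot n|_\Gamma=0$ once we divide by $\rhoi$, for which the hypothesis $\rhoi|_\Gamma>0$ is precisely what is needed. Applying \eqref{ineq2.2} and then using the identity $\nabla\cdot F_{\rm irr}=\nabla\Phi\cdot F_{\rm irr}$ (already derived in the proof of Theorem~\ref{theorem1} from $\nabla\cdot\Ji=0$) yields
\begin{equation*}
\int_\Omega |F_{\rm irr}|^2 dx\le C(\Omega)\left(\int_\Omega |\nabla\times F_{\rm irr}|^2 dx+\int_\Omega |\nabla\Phi\cdot F_{\rm irr}|^2 dx\right).
\end{equation*}
Notice that, unlike in Theorem~\ref{theorem1}, no $|\nabla\Phi\times F_{\rm irr}|^2$ term appears: this is the pay-off for not pre-multiplying by $\rhoi^{1/2}$ inside the curl.

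The final step is a Hölder redistribution on each of the two resulting integrals. Writing for instance $|\nabla\times F_{\rm irr}|^2=\bigl(|\nabla\times F_{\rm irr}|^2\rhoi^{1/q}\bigr)\cdot\rhoi^{-1/q}$ and applying Hölder with exponents $q$ and $q/(q-1)$, I obtain
\begin{equation*}
\int_\Omega |\nabla\times F_{\rm irr}|^2 dx\le\Bigl(\int_\Omega |\nabla\times F_{\rm irr}|^{2q}\rhoi\,dx\Bigr)^{1/q}\Bigl(\int_\Omega \rhoi^{1/(1-q)}dx\Bigr)^{(q-1)/q},
\end{equation*}
using $-1/(q-1)=1/(1-q)$, and analogously for $|\nabla\Phi\cdot F_{\rm irr}|^2$. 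Collecting the constants then gives exactly the claimed form of $C(\Omega,\rhoi)$.

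The main substantive point, rather than a difficulty, is the choice to apply \eqref{ineq2.2} to $F_{\rm irr}$ itself: this is what eliminates the cross-product contribution and is made possible only by the extra hypotheses $\sup_\Omega\rhoi<\infty$ and $\rhoi^{1/(1-q)}\in L^1(\Omega)$, which together absorb the $\rhoi$-weighting that Theorem~\ref{theorem1} encoded directly inside the Sobolev inequality. The slight subtlety to flag is the boundary trace, which is why $\rhoi|_\Gamma>0$ enters the hypotheses; no genuinely new estimate beyond Theorem~\ref{theoremU} and Hölder is required.
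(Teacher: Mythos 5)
Your proposal is correct and follows essentially the same route as the paper's own proof: extract $\sup_\Omega\rhoi$, apply inequality \eqref{ineq2.2} of Theorem~\ref{theoremU} to the unweighted field $F_{\rm irr}$ (with $\rhoi|_\Gamma>0$ supplying the boundary trace $F_{\rm irr}\cdot n|_\Gamma=0$), substitute $\nabla\cdot F_{\rm irr}=\nabla\Phi\cdot F_{\rm irr}$, and restore the $\rhoi$-weight via H\"older with exponents $q$ and $q/(q-1)$. Your exponent bookkeeping and the resulting form of $C(\Omega,\rhoi)$ match the paper exactly.
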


\begin{proof}
We have
\begin{equation*}
 \int_{\Omega}\lvert F_{\rm irr}\rvert^2\rhoi dx\leq  \sup_\Omega \rhoi \int_{\Omega}\lvert F_{\rm irr}\rvert^2 dx.       
\end{equation*}
The assumption $\rhoi>0$ on $\Gamma$ implies $F_{\rm irr}\cdot n\vert_\Gamma=0$, so we apply Theorem \ref{theoremU} to $F_{\rm irr}$ to get
\begin{align*}
\int_{\Omega}\lvert F_{\rm irr}\rvert^2 dx
&\leq C(\Omega)\left(\int_{\Omega}\lvert\nabla\times F_{\rm irr}\rvert^2dx+\int_{\Omega}\lvert \nabla\cdot F_{\rm irr}\rvert^2dx \right)\\
&= C(\Omega)\left(  \int_\Omega \lvert \nabla\times F_{\rm irr}\rvert ^2\rhoi^{1/q}\rhoi^{-1/q} dx +\int_\Omega \lvert \nabla\Phi\cdot F_{\rm irr}\rvert ^2\rhoi^{1/q}\rhoi^{-1/q}dx \right)\\
&\leq C(\Omega)\left(\int_\Omega \rhoi^{1/(1-q)}dx \right)^{(q-1)/q}\\
&\quad \left(\left(\int_\Omega \lvert\nabla\times F_{\rm irr}\rvert ^{2q}\rhoi dx\right)^{1/q}+\left(\int_\Omega \lvert \nabla\Phi\cdot F_{\rm irr}\rvert ^{2q}\rhoi dx\right)^{1/q}\right),
\end{align*}
where the constant $C(\Omega)>0$ only depends on $\Omega$, and in the last step we have used H\"older's inequality.    
\end{proof}

\begin{remark}
Since $\rhoi^{1/(1-q)}\in L^1(\Omega)$ if and only if  $\rhoi^{-1}\in L^{1/(q-1)}(\Omega)$, to have $q$ approach $1$ we require higher integrability of $\rhoi^{-1}$. If 
\begin{equation*}
   \rhoi^{-1}\in L^{\infty}(\Omega)\quad \text{i.e.}\ \inf_\Omega\rhoi>0,
\end{equation*}
we can set $q=1$ to get
\begin{equation*}
    \dot{\Sigma}_s\leq C(\Omega,\rhoi) \left(V+\int_\Omega \lvert\nabla\Phi\cdot F_{\rm irr}\rvert ^2\rhoi dx\right),
\end{equation*}
where the constant $C(\Omega,\rhoi)$ takes the form
\begin{equation*}
    C(\Omega,\rhoi)=C(\Omega)\cdot \sup_\Omega \rhoi\cdot \left(\inf_\Omega \rhoi \right)^{-1}, 
\end{equation*}
and in such cases $\rhoi$ defines a measure on $\Omega$ which is equivalent to the Lebesgue measure.
\end{remark}

\begin{corollary}\label{bounds}
Suppose the domain $\Omega$ is bounded, Lipschitz-continuous and
simply-connected, and the steady state density $\rhoi$ is strictly positive on $\Gamma$. Then there exists a constant $C(\Omega)>0$ depending only on $\Omega$, such that 
 \begin{equation*}
    \dot{\Sigma}_s\leq C(\Omega)\cdot \sup_\Omega \rhoi \left(\int_{\Omega}\lvert\nabla\times F_{\rm irr}\rvert^2dx+\int_{\Omega}\lvert \nabla\Phi\cdot F_{\rm irr}\rvert^2dx \right)
\end{equation*}
and 
\begin{equation*}
    \dot{\Sigma}_s\leq C(\Omega)\cdot \sup_\Omega \rhoi\cdot \left(\inf_\Omega \rhoi \right)^{-1} \left(V+\int_{\Omega}\lvert \nabla\Phi\cdot F_{\rm irr}\rvert^2\rhoi dx \right).
\end{equation*}
\end{corollary}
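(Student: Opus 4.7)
The plan is to deduce both inequalities as direct specializations of the argument used in Theorem~\ref{theorem2} (equivalently the $q=1$ case discussed in the preceding remark), taking advantage of the stronger pointwise $L^\infty$ control on $\rhoi$ that is now available. Under the standing regularity of $\rhoi$, the hypothesis $\rhoi|_\Gamma>0$ together with continuity on the bounded closure of $\Omega$ automatically delivers $\sup_\Omega\rhoi<\infty$, and when moreover $\inf_\Omega\rhoi>0$ the Hölder step from Theorem~\ref{theorem2} becomes superfluous.

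For the first bound I would follow the opening of Section~\ref{sec4}: reduce $\dot{\Sigma}_s$ to $\int_\Omega|F_{\rm irr}|^2\rhoi\,dx$ via the smallest-eigenvalue bound on $D^{-1}$, and then factor out $\sup_\Omega\rhoi$ to pass to the unweighted integral $\int_\Omega|F_{\rm irr}|^2\,dx$. The core step is to apply the Sobolev-type bound \eqref{ineq2.2} from Theorem~\ref{theoremU} directly to $F_{\rm irr}$ itself (rather than to $F_{\rm irr}\rhoi^{1/2}$ as in Theorem~\ref{theorem1}), for which the boundary condition $F_{\rm irr}\cdot n|_\Gamma=0$ must be verified; this is precisely where $\rhoi|_\Gamma>0$ enters, since the identity $\Ji\cdot n|_\Gamma=0$ can then be divided pointwise by $\rhoi|_\Gamma$. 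Substituting the incompressibility identity $\nabla\cdot F_{\rm irr}=\nabla\Phi\cdot F_{\rm irr}$ (already derived in the proof of Theorem~\ref{theorem1} from $\nabla\cdot\Ji=0$) then turns the divergence term into $|\nabla\Phi\cdot F_{\rm irr}|^2$, yielding the first displayed inequality.

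For the second bound I would insert the factor $\rhoi\cdot\rhoi^{-1}$ inside each of the two unweighted integrals appearing in the first bound and apply the pointwise estimate $\rhoi^{-1}\le(\inf_\Omega\rhoi)^{-1}$, which converts the unweighted $L^2$ integrals into the $\rhoi$-weighted ones defining $V$ and the level-set contribution, at the cost of a single additional factor $(\inf_\Omega\rhoi)^{-1}$ in the overall constant. There is no substantive obstacle: the corollary is a clean repackaging of the $q=1$ specialization of Theorem~\ref{theorem2}, and the only items requiring care are verifying the boundary condition $F_{\rm irr}\cdot n|_\Gamma=0$ (which fails without $\rhoi|_\Gamma>0$) and invoking the algebraic identity $\nabla\cdot F_{\rm irr}=\nabla\Phi\cdot F_{\rm irr}$ exactly once.
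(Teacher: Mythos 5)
Your proposal is correct and follows essentially the same route as the paper: the paper's own proof of Corollary~\ref{bounds} simply refers back to the argument of Theorem~\ref{theorem2} (applying inequality \eqref{ineq2.2} directly to $F_{\rm irr}$ after verifying $F_{\rm irr}\cdot n\vert_\Gamma=0$ from $\rhoi\vert_\Gamma>0$, and using $\nabla\cdot F_{\rm irr}=\nabla\Phi\cdot F_{\rm irr}$), with the second bound obtained exactly as in the preceding remark by the $\rhoi\cdot\rhoi^{-1}$ insertion and the $(\inf_\Omega\rhoi)^{-1}$ estimate. No gaps.
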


\begin{proof}
The statements follow from the lines of the proof of Theorem \ref{theorem2}. If $\inf_\Omega \rhoi =0$ we interpret the right-hand-side as infinity.
\end{proof}


We have shown in Corollary \ref{Freidlin-Wentzel} that when the steady
state density is peaked, the entropy production rate is dominated by
the vorticity-like quantity $\int_{\Omega}\lvert \nabla\Phi\times
F\rvert ^2\rhoi \ dx$. Conversely, if the potential $\Phi(x)$ is a
parabola such that $\lvert\nabla\Phi(x)\rvert$ is small near the
minimum of $\Phi(x)$ (and thus where the mass of $\rhoi$ is concentrated), e.g.\ $\rhoi$ is a Gaussian centered at the origin and $\nabla\Phi(x)=cx$ for some constant $c$, then the upper-bounds in Corollary \ref{bounds} are also dominated by vorticity terms.

\begin{corollary}
Suppose $\omega\subset\Omega$, $\Phi(x)$ is a parabola on $\omega$ with principal axes $\partial^2_{x_i}\Phi(x)=K_i>0$, $1\leq i\leq d$, ${\rm detHess}(\Phi)\geq 0$ and $\omega$ contains a local minimum of $\Phi$. For $d=2$, if the diameter of $\omega$ is much smaller than $(\Delta\Phi)^{-1/2}$ then $\int_\omega \lvert \nabla\times F_{\rm irr} \rvert^2 dx$ dominates over $\int_\omega \lvert F_{\rm irr}\cdot\nabla\Phi \rvert^2 dx$ in the sense that
\begin{equation*}
  \int_\omega \lvert F_{\rm irr}\cdot\nabla\Phi
    \rvert^2 dx \lesssim (\Delta \Phi)\, {\rm diam}(\omega)^2 \int_\omega \lvert \nabla\times F_{\rm irr} \rvert^2 dx.
\end{equation*}
For $d=3$, we have the same conclusion provided that the system possesses certain symmetry, which will be specified in the proof.
\end{corollary}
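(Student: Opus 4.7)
My plan is to reduce the claim to a pure div-curl inequality. Using the identity $\nabla\cdot F_{\rm irr}=\nabla\Phi\cdot F_{\rm irr}$ already established in the proof of Theorem~\ref{theorem1}, the integrand on the left-hand side equals $(\nabla\cdot F_{\rm irr})^2$ pointwise, so it suffices to prove
\begin{equation*}
\int_\omega (\nabla\cdot F_{\rm irr})^2\,dx \lesssim (\Delta\Phi)\,{\rm diam}(\omega)^2\int_\omega |\nabla\times F_{\rm irr}|^2\,dx.
\end{equation*}
To exploit the parabolic structure of $\Phi$, I parameterise the antisymmetric matrix $A$ in $F_{\rm irr}=-A\nabla\Phi+\nabla\cdot A$ by a potential: in $d=2$, $A=aJ$ with $J$ the $90^\circ$-rotation and $a$ a scalar; in $d=3$, $A_{ij}=\varepsilon_{ijk}a_k$ for a vector field $a$ fixed in the Coulomb gauge $\nabla\cdot a=0$.

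For $d=2$, centring the minimum of $\Phi$ at the origin so that $\Phi(x)=\tfrac12\sum_i K_i x_i^2$, a direct computation gives
\begin{equation*}
\nabla\cdot F_{\rm irr}=K_1 x_1\partial_2 a-K_2 x_2\partial_1 a,\qquad \nabla\times F_{\rm irr}=-\Delta a+\nabla\Phi\cdot\nabla a+(\Delta\Phi)a.
\end{equation*}
Cauchy--Schwarz on the first identity yields the pointwise estimate $|\nabla\cdot F_{\rm irr}|^2\le (\max_i K_i)^2\,{\rm diam}(\omega)^2\,|\nabla a|^2\le (\Delta\Phi)^2\,{\rm diam}(\omega)^2\,|\nabla a|^2$. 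To control $\int_\omega|\nabla a|^2$ by the curl, I test the second identity against $a$ and integrate by parts; using $\int_\omega a\nabla\Phi\cdot\nabla a\,dx=-\tfrac12(\Delta\Phi)\int_\omega a^2\,dx$ (modulo boundary terms) one obtains
\begin{equation*}
\int_\omega a(\nabla\times F_{\rm irr})\,dx=\int_\omega|\nabla a|^2\,dx+\tfrac12(\Delta\Phi)\int_\omega a^2\,dx+R_\partial.
\end{equation*}
Cauchy--Schwarz and Young's inequality with weight $\Delta\Phi$ then give $\int_\omega|\nabla a|^2\,dx\le (\Delta\Phi)^{-1}\int_\omega|\nabla\times F_{\rm irr}|^2\,dx+R_\partial$, and chaining with the pointwise divergence bound produces the claimed inequality with a net factor $(\Delta\Phi)^2\cdot(\Delta\Phi)^{-1}=\Delta\Phi$.

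For $d=3$ the analogous computation in the Coulomb gauge yields $\nabla\times F_{\rm irr}=-\Delta a+(\nabla\Phi\cdot\nabla)a+(\Delta\Phi)a-Ha$ with $H={\rm Hess}(\Phi)={\rm diag}(K_i)$ constant. Testing against $a$ and integrating by parts gives
\begin{equation*}
\int_\omega a\cdot(\nabla\times F_{\rm irr})\,dx=\int_\omega|\nabla a|^2\,dx+\int_\omega\sum_i\bigl(\tfrac12\Delta\Phi-K_i\bigr)a_i^2\,dx+R_\partial.
\end{equation*}
The quadratic form $\sum_i(\tfrac12\Delta\Phi-K_i)a_i^2$ is positive definite precisely when $K_i<\sum_{j\ne i}K_j$ for every $i$, a triangle-inequality condition which is the \emph{certain symmetry} referenced in the statement and which holds in particular whenever $K_1=K_2=K_3$. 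Under this hypothesis the same Young's argument combined with the pointwise bound $|\nabla\cdot F_{\rm irr}|^2\le(\max_i K_i)^2\,{\rm diam}(\omega)^2\,|\nabla a|^2$ closes the argument exactly as in $d=2$.

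The principal technical obstacle is the boundary remainder $R_\partial$ appearing in the two integrations by parts, which does not vanish on a generic subdomain. The natural remedies are (i) to take $\omega$ to be a sublevel set of $\Phi$ around the minimum, so that the sign of $\nabla\Phi\cdot n$ on $\partial\omega$ is fixed and $F_{\rm irr}\cdot n$ can be tied to the geometry of $\rho_{\rm inv}$-level sets; or (ii) to introduce a smooth cutoff on $a$, absorbing the $O({\rm diam}(\omega)^{-2})$ correction against the smallness of ${\rm diam}(\omega)^2\Delta\Phi$. The secondary obstacle in $d=3$ is that if one $K_i$ dominates the other two the quadratic form loses definiteness and the argument collapses; this is exactly why the statement restricts the three-dimensional case to a symmetric regime.
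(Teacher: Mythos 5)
Your $d=2$ argument follows the paper's strategy quite closely: the same scalar parametrization $A=aJ$ of the antisymmetric matrix, the same pointwise bound $\lvert\nabla\Phi\rvert\le(\max_i K_i)\,{\rm diam}(\omega)$ obtained from $\nabla\Phi=0$ at the minimum, and the same target estimate $\int_\omega\lvert\nabla a\rvert^2\lesssim K^{-1}\int_\omega\lvert\nabla\times F_{\rm irr}\rvert^2$ with $K=\Delta\Phi$, which then chains with the divergence bound to give the factor $K\,{\rm diam}(\omega)^2$. (Your preliminary reduction via $F_{\rm irr}\cdot\nabla\Phi=\nabla\cdot F_{\rm irr}$ is correct but buys nothing, since the direct computation of $F_{\rm irr}\cdot\nabla\Phi$ yields the same first-order expression.) The methodological difference is how the key elliptic estimate is obtained: the paper reads off the Fourier-multiplier inequality $\lvert\mathcal{F}(-\Delta+K)a\rvert\ge\sqrt{K}\,(\lvert\xi_1\rvert+\lvert\xi_2\rvert)\lvert\mathcal{F}a\rvert$ and invokes Parseval, treating the first-order term $\nabla\Phi\cdot\nabla a$ as a small perturbation, whereas you test the curl identity against $a$ and integrate by parts, absorbing that term exactly into the zeroth-order quadratic form. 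Your energy method is cleaner on that point, but it leaves a genuine gap: the boundary remainder $R_\partial$ (from both $-\int_\omega a\Delta a$ and $\int_\omega a\,\nabla\Phi\cdot\nabla a$) is acknowledged but never controlled, so the inequality $\int_\omega\lvert\nabla a\rvert^2\le(2\Delta\Phi)^{-1}\int_\omega\lvert\nabla\times F_{\rm irr}\rvert^2$ is not actually established on a generic subdomain. Neither of your proposed remedies is carried out, and the cutoff route in particular generates a term of order ${\rm diam}(\omega)^{-2}\int_\omega a^2$ whose smallness would require a Poincar\'e-type normalization of $a$ that you have not arranged. To be fair, the paper's Fourier argument has an analogous unaddressed localization issue (it tacitly extends $a$ from $\omega$ to $\mathbb{R}^2$), but as written your proof defers its hardest step.

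In $d=3$ you genuinely diverge from the paper, and in an interesting direction. The paper keeps a generic antisymmetric $A$ and, because the resulting second-order operator acting on the $a_i$ is not elliptic, imposes strong structural symmetry on both $\Phi$ and $A$ (vanishing mixed partials of $\Phi$ and of the $a_k$, and gradients of the $a_j$ proportional to one another). You instead set $A_{ij}=\varepsilon_{ijk}a_k$ in the Coulomb gauge, which turns the curl into the genuinely elliptic expression $-\Delta a+(\nabla\Phi\cdot\nabla)a+(\Delta\Phi)a-Ha$ (this identity checks out), and your resulting condition $K_i<\sum_{j\ne i}K_j$ constrains only the potential, not $A$ --- a weaker and more natural hypothesis than the paper's. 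This is a real conceptual improvement, but it needs two things you do not supply: (i) a proof that the gauge $\nabla\cdot a=0$ can be imposed on $\omega$ --- the gauge freedom acts as $a\mapsto a+\rho_{\rm inv}^{-1}\nabla\chi$, so you must solve $\nabla\cdot(\rho_{\rm inv}^{-1}\nabla\chi)=-\nabla\cdot a$ with suitable boundary conditions and check this does not disturb the rest of the setup; and (ii) the same control of $R_\partial$ as in two dimensions. Until both are filled in, the three-dimensional claim remains a plausible sketch rather than a proof.
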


\begin{proof}
We first deal with the 2-dimensional case as the notation is simpler. We use $(x,y)$ to denote the two independent variables and use the decomposition \eqref{Fdecomp} to write
\begin{equation*}
    F_{\rm irr}=(-a\partial_y \Phi,a\partial_x \Phi)^\intercal+(\partial_y a,-\partial_x a)^\intercal,
\end{equation*}
where $A=
\big(\begin{smallmatrix}
0 & a(x,y)\\
-a(x,y) & 0
\end{smallmatrix}\big)$ and $F_{\rm irr}=-A\nabla\Phi+\nabla\cdot A$. Then
\begin{equation*}
    F_{\rm irr}\cdot\nabla\Phi=-\{\partial_y\Phi\}\{\partial_x a\}+\{\partial_x\Phi\}\{\partial_y a\}
\end{equation*}
where $\{\cdot\}$ signifies that the differentials only act within the bracket and
\begin{equation*}
    \nabla\times F_{\rm irr}=\{\partial_y a\}\{\partial_y\Phi\}+\{\partial_x a\}\{\partial_x\Phi\}-\Delta a+a\Delta \Phi.
\end{equation*}
Since $\Delta \Phi=\sum_{i=1}^dK_i\equiv K$ is a positive constant, we can prove that
$\|(-\Delta+K)a(x,y)\|_{L^2(\omega)}$ bounds $\lvert a\rvert
_{H^1(\omega)}$ from above using the Fourier transform $\mathcal{F}$:
\begin{equation*}
\lvert\mathcal{F}(-\Delta+K)a\rvert=(K+\xi_1^2+\xi_2^2)\lvert \mathcal{F}a \rvert \geq \sqrt{K}(\lvert\xi_1\rvert+\lvert\xi_2\rvert)\lvert \mathcal{F}a \rvert = \sqrt{K}(\lvert\mathcal{F}\partial_x a \rvert+ \lvert\mathcal{F}\partial_y a \rvert)
\end{equation*}
together with the Parseval's identity, which asserts that the Fourier
transform on $L^2(\mathbb{R}^d)$ preserves the $L^2$ norm. If $\omega$
contains a local minimum of $\Phi$, say at $(x_0,y_0)\in \omega$, then $\nabla\Phi(x_0,y_0)=0$ and
\begin{align*}
    \lvert \partial_x\Phi(x,y)\rvert =&\lvert \int_{0}^{1}\partial^2_x \Phi(x_0+s(x-x_0),y_0+s(y-y_0))(x-x_0)\\
    &+\partial^2_{xy} \Phi(x_0+s(x-x_0),y_0+s(y-y_0))(y-y_0) ds\rvert \leq K\text{diam}(\omega).
\end{align*}
where the mixed derivative of $\Phi$ is controlled by $K$ because ${\rm detHess}(\Phi)\geq 0$. The same uniform bound also holds for $\lvert \partial_y\Phi(x,y)\rvert$. Putting everything together and setting $\lvert \Phi\rvert_1\equiv \sup_\omega\{\lvert\partial_x\Phi(x,y)\rvert,\lvert\partial_y\Phi(x,y)\rvert\}$, we have
\begin{align*}
    \int_\omega \lvert F_{\rm irr}\cdot\nabla\Phi
    \rvert^2 dx &\lesssim \lvert \Phi\rvert_1^2\lvert a\rvert^2_{H^1(\omega)}\\
    &\lesssim K^{-1}\lvert \Phi \rvert^2_1\|(-\Delta+K)a(x,y)\|^2_{L^2(\omega)}\\
    &\lesssim K^{-1}\lvert \Phi \rvert^2_1\int_\omega \lvert \nabla\times F_{\rm irr} \rvert^2 dx \\
    &\lesssim K\, \text{diam}(\omega)^2 \int_\omega \lvert \nabla\times F_{\rm irr} \rvert^2 dx.
\end{align*}
Therefore, if the diameter of the subdomain $\omega$ is much smaller than $1/\sqrt{K}$, the vorticity is dominating over the level-set variations.

Now we consider the 3-dimensional case and use $(x,y,z)\in \mathbb{R}^3$ to denote the independent variables. Let 
\begin{equation*}
A=\begin{pmatrix}
0 & a_1(x,y,z) &  a_2(x,y,z) \\
-a_1(x,y,z) & 0 & a_3(x,y,z) \\
-a_2(x,y,z) & -a_3(x,y,z) & 0
\end{pmatrix}
\end{equation*}
and set $F_{\rm irr}=-A\nabla\Phi+\nabla\cdot A$. We compute
\begin{align*}
    F_{\rm irr}\cdot\nabla\Phi&=\{\partial_z a_3\}\{\partial_y\Phi\}-\{\partial_x a_1\}\{\partial_y\Phi\}-\{\partial_y a_3\}\{\partial_z\Phi\}-\{\partial_x a_2\}\{\partial_z\Phi\}\\&+\{\partial_z a_2\}\{\partial_x\Phi\}+\{\partial_y a_1\}\{\partial_x\Phi\},
\end{align*}
which involves only first order derivatives of $a_i$ and
$\Phi$. Moreover, we have
\begin{align*}
    (\nabla\times F_{\rm irr})_1 =&-\partial^2_z a_3+a_3\partial^2_z\Phi-\partial^2_y a_3+a_3\partial^2_y\Phi\\
    &+\partial^2_{xz} a_1-a_1\partial^2_{xz}\Phi-\partial^2_{xy} a_2+a_2\partial^2_{xy}\Phi\\
    &+\{\partial_z a_3\}\{\partial_z\Phi\}+\{\partial_y a_3\}\{\partial_y\Phi\}-\{\partial_z a_1\}\{\partial_x\Phi\}+\{\partial_y a_2\}\{\partial_x\Phi\},
\end{align*}
and similar expressions for the other two components. Now the
difficulty arises because the second-order differential operators on
$a_i's$ are no longer elliptic, and we require certain
symmetry conditions to control the mixed derivatives as well as the first order terms. For instance, the assertion of the theorem holds if the three spacial dimensions are independent and self-similar, meaning that $\partial^2_{x_i x_j}\Phi=0,\ \forall i\neq j$, $\partial^2_{x_ix_j}a_k=0,\, \forall i\neq j,\,1\le k\le 3$ and $\{\partial_{x_i}a_j: 1\leq i,j\leq 3\}$ are multiples of one another. Then $\|\nabla \times F_{\rm irr}\|_{L^2(\omega)}$ bounds $\|F_{\rm
  irr}\cdot\nabla\Phi\|_{L^2(\omega)}$ from above, and it is dominating when the diameter of $\omega$ is small relative to $1/\sqrt{K}$.
\end{proof}

\bmhead{Acknowledgments}

Financial support from the German Research Foundation (DFG) through the Emmy Noether Program GO 2762/1-2 (to A.~G.) is gratefully acknowledged.

\bmhead{Data availability} Data sharing not applicable to this article
as no datasets were generated or analysed during the current study.

\bmhead{Declarations} The authors have no competing interests to declare that are relevant to the content of this article.

\bibliography{vorticity.bib}


\end{document}